\newenvironment{breakablealgorithm}
  {
   \begin{center}
     \refstepcounter{algorithm}
     \hrule height.8pt depth0pt \kern2pt
     \renewcommand{\caption}[2][\relax]{
       {\raggedright\textbf{\ALG@name~\thealgorithm} ##2\par}%
       \ifx\relax##1\relax 
         \addcontentsline{loa}{algorithm}{\protect\numberline{\thealgorithm}##2}%
       \else 
         \addcontentsline{loa}{algorithm}{\protect\numberline{\thealgorithm}##1}%
       \fi
       \kern2pt\hrule\kern2pt
     }
  }{
     \kern2pt\hrule\relax
   \end{center}
  }
\newtheorem{theorem}{Theorem}[section]
\newtheorem{lemma}{Lemma}[section]
\newcommand{\comm}{\color{black}}
\begin{document}

\title{\textbf {The Exploratory Multi-Asset Mean-Variance Portfolio Selection using Reinforcement Learning}}

	\author[a]{Yu Li}
	\author[a]{Yuhan Wu}
    \author[a,b]{Shuhua Zhang\footnote{Corresponding author}}
	\affil[a]{Coordinated Innovation Center for Computable Modeling in Management Science,

	Tianjin University of Finance and Economics, Tianjin 300222, China}
	\affil[b]{Zhujiang College, South China Agricultural University, Guangzhou 510900, China}
	\renewcommand*{\Affilfont}{\footnotesize\it}

\renewcommand{\thefootnote}{}
\footnotetext{E-mail addresses: liyu@tjufe.edu.cn (Yu Li), wuyuhan@stu.tjufe.edu.cn (Yuhan Wu), szhang@tjufe.edu.cn (Shuhua Zhang)}

\date{}
\maketitle

\begin{abstract}

In this paper, 
we study
the continuous-time multi-asset mean-variance (MV) portfolio selection using a reinforcement learning (RL) algorithm, 
specifically the soft actor-critic (SAC) algorithm, 
in the time-varying financial market.
A family of Gaussian portfolio selections is derived, 
and a policy iteration process is crafted to learn the optimal {\comm exploratory} portfolio selection. 
We prove the convergence of the policy iteration process theoretically, based on which the SAC algorithm is developed.
To improve the algorithm's stability and the learning accuracy in the multi-asset scenario, we divide the model parameters that influence the optimal portfolio selection into {\comm three parts}, and learn each part progressively.
Numerical studies in the simulated and real financial markets confirm the superior performance of the proposed SAC algorithm under various criteria.

\vskip0.3cm {\bf Key words.}
Multi-asset financial markets;
Mean-variance portfolio selection;
Soft actor-critic algorithm;
Policy iteration procedure.

\end{abstract}

\renewcommand{\thefootnote}{}
\footnotetext{Abbreviation statement: soft actor-critic (SAC); mean-variance (MV)}


\newpage
\section{Introduction}

In financial markets, the assets are broadly categorized into two types: risky assets and riskless assets.
The portfolio selection problem is a study of seeking an allocation of wealth among these different assets.
Since the future prices of risky assets are unknown, 
investors are always looking for the portfolios which effectively balance investment return opportunities against risks. 
\cite{Markowitz1952Portfolio} lays the fundamental basis for the portfolio selection problem by modeling the prices of risky assets as random variables.
Then, the investment returns and risks are quantified by the expectation and variance of the portfolios, respectively, forming the mean-variance (MV) model.

Since the establishment of the MV model, scholars have sought to enhance its applicability in dynamic financial markets.
\cite{Merton1969Lifetime} pioneers the groundwork of dynamic stochastic processes, and \cite{Li2000Optimal} constructs a continuous-time MV model via stochastic linear-quadratic control theory.
In this continuous-time framework, the portfolio is viewed as a continuous sequence of decisions, allowing investors to adjust their allocations dynamically in response to evolving financial markets.
Subsequently, numerous related issues have been extensively investigated under the continuous-time MV framework, including mean-variance hedging \citep{Schweizer2010Mean}, {\comm ``local mean-variance efficiency''} \citep{Czichowsky2013Time}, and state-dependent risk aversion \citep{Bjoerk2014Mean}. 
Among these researches, the multi-asset portfolio selection problem holds critical importance, in which the correlations among different risky assets are considered.
In the multi-asset context, investors can diversify their wealth across these risky assets to reduce investment risk, under a given expected return.


The traditional paradigm for implementing {\comm MV portfolio} follows ``separation principle'', which separates the steps between estimation and optimization.
In the first step, model parameters are estimated from time-series data of risky asset prices using maximum likelihood estimation (MLE). 
In the second step, these estimated parameters are taken as given, and optimization of the MV model is focused on. 
However, researches have documented that this ``separation principle'' is difficult to generate good out-of-sample performance,
especially in multi-asset financial markets
\citep{Jobson1981Putting,Broadie1993Computing,DeMiguel2007Optimal, Ledoit2017Nonlinear, Lian2019Portfolio, Barroso2022Lest, Hiraki2022toolkit}.

One popular method to mitigate estimation errors in portfolio selection is the use of shrinkage estimators, which balances the low bias of sample-based estimation with the low variance of pre-specified structural models (e.g., single-index frameworks, constant-correlation matrices, or equally-weighted portfolio benchmarks).
For the expected return vector, \cite{Jorion1986Bayes} introduces the Bayes-Stein estimation, shrinking sample means toward a prior belief to reduce estimation variance.
For the covariance matrix, \cite{Ledoit2003Improved,Ledoit2004Honey,Ledoit2004well} linearly combine the sample covariance matrix with structured models (e.g., factor models) to minimize mean squared error and enhance robustness in high-dimensional scenarios.
What's more, \cite{Candelon2012Sampling} proposes a double shrinkage methodology: first applying Bayes-Stein shrinkage to the covariance matrix, then regularizing the portfolio toward an equally-weighted benchmark, thereby further reducing the sampling error in the case of small samples.
Additionally, inverse covariance matrix shrinkage offers an alternative approach, applying directly to portfolio optimization without the need to calculate the inverse. 
The conditional number regularization estimator proposed by \cite{Won2013Condition} 
and the eigenvalues regularization estimator proposed by \cite{Shi2020Improving} represent the unstructured estimation approaches for inverse covariance matrix without priori beliefs.

However, there exists inconsistency in ``separation principle'' between the parameters estimation and portfolio optimization.
Estimation is intended to minimize its prediction error, i.e., mean squared error, while optimization is to maximize the MV utility. 
In contrast, reinforcement learning (RL) algorithms avoid the inconsistency in the traditional paradigm, which learn the optimal portfolio selection directly through interactions with the financial market \citep{Fischer2018Reinforcement}.
To improve the generalization capabilities of portfolios,
\cite{Haarnoja2018Softa,Haarnoja2018Softb} firstly proposes a Soft Actor-Critic (SAC) algorithm,
and \cite{Wang2020Continuous} develops the SAC algorithm into the continuous-time single-asset MV model with a stationary financial market.
In their approach, the SAC agent learns the optimal allocation with a exploratory portfolio selection, iterating by the corresponding policy evaluation and policy improvement theorem.
Numerical results from their study indicate that the SAC algorithm has significant advantages over the traditional paradigm, including higher investment returns, lower risks, improved Sharpe ratio, reduced training time, and a faster-converging learning curve. 
This study provides a new scheme for applying RL algorithm to continuous-time MV portfolio selection problem with better performance.

In the subsequent years, the application of SAC algorithm has garnered substantial attention across diverse investment management scenarios.
For instance,
\cite{Zhu2021Optimal} conducts a paired-trading study of two risky assets under the MV model. 
\cite{Guo2022Entropy} introduces the impact of exploratory portfolio selection with learning in the mean-field game. 
\cite{Jiang2022Reinforcement} explores the wealth allocation under the Kelly criterion. 
\cite{Bender2023Entropy} considers that the risky asset prices contain jump processes. 
\cite{Aquino2023Portfolio} innovatively considers the trade-off between exploiting existing assets and exploring investment opportunities with new assets.
\cite{Dai2023Learning} studys the Merton utility maximization problem with time-consistent portfolio selections. 
Although numerous studies have expanded the SAC-based portfolio selection problem from various aspects, the majority still focuses on single-asset problems.

When it comes to the continuous-time multi-asset MV portfolio selection, the SAC algorithm shows promise but encounters challenges.
As the number of risky assets increases, the computational complexity of the SAC algorithm escalates exponentially.  
This exponential growth in complexity not only hampers the algorithm's efficiency but also limits its scalability for practical applications.
Moreover, maintaining the learning accuracy and stability of SAC algorithms in a multi-asset context is a formidable task.  
The complex interactions among multiple assets introduce additional noise and uncertainty, making it difficult to ensure that the algorithm converges stably. 
Without stable convergence, the performance and effectiveness of the learned portfolios remain dubious. 

In this paper, we aim to tackle these challenges. 
In order to enhance the learning stability and efficiency of the SAC algorithm in multi-asset portfolio selection, we decouple the learning processes.
We first separate the long-term and immediate factors that influence the portfolio selection.
When learning long-term factors, we focus on exploring stable patterns embedded in macroeconomic trends and industry prospects, avoiding the interference of immediate factors, and thus improving the stability of the learning process.
When learning immediate factors, we separate the different risky asset factors that affect the portfolio selection and learn them independently.
This allows us to improve the accuracy and efficiency of the learning process.

Numerical experiments are carried out across various simulated and real financial markets.
In the simulated settings, 
the SAC algorithm demonstrates higher precision in learning parameters compared to MLE.
Additionally, the SAC algorithm yields highly robust and stable results, highlighting its effectiveness under diverse market conditions.
When applied to real financial markets,
four portfolio selections are compared: 
the portfolio selection with our online SAC algorithm, the portfolio selection with maximum likelihood estimation (MLE), the so-called ``buy-and-hold'' portfolio selection, and the broad-market index.
And our SAC algorithm shows remarkable superiority under various criteria, including terminal wealth, Certainty-Equivalent Return (CEQ), and Sharpe Ratio (SR).

In summary, the main contributions of this paper are threefold.

{\comm
1. 
For the multi-asset portfolio selection problem, we design an online SAC algorithm to learn the optimal MV portfolio selection under the time-varying financial market.

2. We develop a policy iteration procedure including policy evaluation and policy improvement for the multi-asset MV portfolio selection, and prove the convergence of the policy iteration procedure.

3. We decouple the learning processes in the SAC algorithm, which improves learning efficiency and stability for it in the multi-asset portfolio selection problem.
}

The remainder of this paper is organized as follows.
In Section 2, we formulate the classical continuous-time multi-asset MV model and show the optimal portfolio selection of it.
Section 3 provides the exploratory formulation for the continuous-time multi-asset MV model.
We develop the policy evaluation and policy improvement theorem to learn the optimal portfolio selection iteratively,
and provide a convergence result theoretically.
In Section 4, we detail the online SAC algorithm and decouple the learning processes in it.
Numerical studies and empirical analyses are presented in Section 5 under various simulated and real financial markets.
Finally, we conclude in Section 6.
Some technical proofs are relegated to Appendices.

\section{Formulation of Problem}

Assume there is one riskless asset (bond) and $n$ risky assets (stocks) available for investment. 
Let the planning investment horizon $[0,T]$ be fixed.
The riskless asset has a constant interest rate $r$.
$\{B_t^{(1)},\dots,B_t^{(n)},0 \leqslant t \leqslant T\}$ is the standard $n$-dimensional Brownian motion defined on a filtered probability space $(\Omega,\mathcal{F},\mathbb{P};\{\mathcal{F}_t\}_{0 \leqslant t \leqslant T})$.
The price of the $i$-th risky asset is observable, whose discounted value can be governed by the stochastic differential equation:
\begin{align}\label{equ:S}
	dS^{(i)}_t = 
	(\mu^{(i)}(t)-r) S^{(i)}_t dt 
	+ \sigma^{(i)}(t) S^{(i)}_t dB^{(i)}_t,
	\qquad i=1,\dots,n,
\end{align}
where the return rate $\mu^{(i)}(t)$ and volatility $\sigma^{(i)}(t)$ are time-dependent.
$dB^{(i)}_t \cdot dB^{(j)}_t=\rho^{(ij)} dt$,
in which $\rho^{(ij)}\in[-1,1]$ is constant and describes the correlation between the return of the $i$-th and the $j$-th risky asset, $i, j\in\{1,\dots,n\}$.
Throughout this paper, we denote the excess expected return vector and the covariance matrix of $n$ risky assets by
\begin{align}\label{equ:DLD}
\mu-r=
\begin{bmatrix}
	\mu^{(1)}(t)-r &\cdots &\mu^{(n)}(t)-r
\end{bmatrix}^\top
\quad\text{and}\quad
\Sigma=DLD^\top,
\end{align}
respectively, where
\begin{align}\label{equ:L}
D=\text{diag}\{\sigma^{(1)}(t),\cdots,\sigma^{(n)}(t)\},
\qquad
L=
\begin{bmatrix}
	\rho^{(11)} &\cdots &\rho^{(1n)}\\
	\rho^{(21)} &\cdots &\rho^{(2n)}\\
	\vdots	    &\vdots &\vdots\\
	\rho^{(n1)} &\cdots &\rho^{(nn)}
\end{bmatrix}.
\end{align}

\subsection{Classical continuous-time MV model}\label{se:classical_model}

We first recall the classical continuous-time mean-variance (MV) model.
The analytic expression of the multi-asset optimal MV portfolio selection is shown in Lemma \ref{th:theta_cl}.

In financial market \eqref{equ:S}, the investor rebalances the portfolio selection dynamically with an allocation $\Theta_t= \begin{bmatrix} \theta^{(1)}_t &\cdots &\theta^{(n)}_t \end{bmatrix}^\top$, $\forall t\in[0,T]$, in which $\theta_t^{(i)}$ is the discounted amount put in the $i$-th risky asset at time $t$.
Under the self-financing condition, the discounted wealth process $W_t$ follows:
\begin{align*}
	dW_t
	= \sum_{i=1}^n \dfrac{\theta_t^{(i)}}{S_t^{(i)}} dS^{(i)}_t
	= \Theta_t^\top \Big( (\mu-r) dt + D dB_t \Big),
\end{align*}
with initial wealth $w^o>0$, where $dB_t = \begin{bmatrix} dB^{(1)}_t &\cdots &dB^{(n)}_t \end{bmatrix}^\top$.
The classical continuous-time MV model aims to consider the portfolio selection which maximizes the trade-off between the expectation and variance of terminal wealth $W_T$:
\begin{equation}\label{model_cl}
	\max_{\{\Theta_t\}} {\rm E}\Big(W_T\Big)-\gamma {\rm Var}\Big(W_T\Big),
\end{equation}
where $\gamma>0$ is the risk aversion coefficient.

Because the variance operator in \eqref{model_cl} is non-smooth, i.e.,
\begin{equation*}
	{\rm Var}_s \Big({\rm Var}_t \Big(\cdot\Big)\Big)\neq {\rm Var}_s\Big(\cdot\Big), \qquad 0\leqslant s<t\leqslant T,
\end{equation*}
the principle of dynamic programming \citep{Bellman1957Dynamic} fails.
In order to obtain the optimal MV portfolio selection,
following \cite{Zhou2000Continuous}, the classical continuous-time MV model \eqref{model_cl} is transformed into a tractable stochastic linear-quadratic problem:
\begin{equation}\label{model_cl_aux}
	\max_{\{\Theta_t\}} {\rm E}\Big(-\gamma W_T^2+\tau W_T\Big),
\end{equation}
with $\tau=1+2\gamma {\rm E}\Big(W_T^*\Big)$, where $\{W_t^*\}_{0\leqslant t \leqslant T}$ is the discounted wealth with the optimal portfolio selection.
Model \eqref{model_cl_aux} can be solved analytically, whose optimal portfolio selection $\{\Theta_t^*\}_{0\leqslant t \leqslant T}$ is shown in Lemma \ref{th:theta_cl}.

\begin{lemma}[\cite{Zhou2000Continuous}]\label{th:theta_cl}
The optimal portfolio selection of model \eqref{model_cl_aux} is given by
\begin{align}\label{equ:theta_n_cl}
\Theta_t^{*}
=( \dfrac{\tau}{2\gamma}-w)\Sigma^{-1}(\mu-r),
\qquad
\forall t\in[0,T],
\end{align}
with $\tau=e^{K(0,T)\cdot T}+2\gamma w^o$, 
where $w$ and $w^o$ are respectively the discounted amount of $t$-time wealth and initial wealth,
and
\begin{align*}
	K(0,T)
	=\dfrac{1}{T}\int_0^T (\mu-r)^\top \Sigma^{-1} (\mu-r)ds.
\end{align*}
\end{lemma}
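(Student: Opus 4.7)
The plan is to apply stochastic dynamic programming to the auxiliary problem \eqref{model_cl_aux}. Since the wealth dynamics $dW_t = \Theta_t^\top (\mu-r)\, dt + \Theta_t^\top D\, dB_t$ is linear in $\Theta$ and the terminal cost $-\gamma W_T^2 + \tau W_T$ is quadratic in $W_T$, the problem falls squarely into the stochastic linear--quadratic framework. First, I would write down the HJB equation for $V(t,w) = \sup_{\{\Theta_s\}_{s\in[t,T]}} {\rm E}[-\gamma W_T^2 + \tau W_T \mid W_t = w]$, namely
$$V_t + \sup_{\Theta \in \mathbb{R}^n} \Bigl\{ \Theta^\top (\mu-r) V_w + \tfrac{1}{2} \Theta^\top \Sigma \Theta \, V_{ww} \Bigr\} = 0, \qquad V(T,w) = -\gamma w^2 + \tau w.$$

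Second, I would try the quadratic ansatz $V(t,w) = a(t) w^2 + b(t) w + c(t)$. Assuming $a(t)<0$ a priori, the pointwise maximizer is
$$\Theta^*_t = -\frac{V_w}{V_{ww}}\, \Sigma^{-1} (\mu-r) = -\Bigl(w + \frac{b(t)}{2a(t)}\Bigr) \Sigma^{-1} (\mu-r).$$
Substituting this back into the HJB and matching powers of $w$ would produce the decoupled ODEs $a'(t) = K(t) a(t)$, $b'(t) = K(t) b(t)$, and $c'(t) = K(t) b(t)^2/(4 a(t))$ with $K(t) := (\mu-r)^\top \Sigma^{-1}(\mu-r)$ and terminal data $a(T)=-\gamma$, $b(T)=\tau$, $c(T)=0$. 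Since $a$ and $b$ satisfy the same linear equation, the ratio $-b(t)/(2a(t)) = \tau/(2\gamma)$ is in fact constant in $t$, and the feedback collapses to exactly the form \eqref{equ:theta_n_cl}. The sign $a(t)<0$ on $[0,T]$ follows at once from the closed-form $a(t) = -\gamma \exp(-\int_t^T K(s)\,ds)$, justifying the ansatz retrospectively.

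Third, to pin down $\tau$, I would exploit the self-consistency condition $\tau = 1 + 2\gamma {\rm E}[W_T^*]$. Inserting $\Theta^*_t$ into the wealth SDE gives
$$dW_t^* = \Bigl(\tfrac{\tau}{2\gamma} - W_t^*\Bigr) K(t)\, dt + \Bigl(\tfrac{\tau}{2\gamma} - W_t^*\Bigr) (\mu-r)^\top \Sigma^{-1} D\, dB_t,$$
so $m(t) := {\rm E}[W_t^*]$ satisfies the linear ODE $m'(t) = (\tfrac{\tau}{2\gamma} - m(t)) K(t)$ with $m(0) = w^o$. Solving by the integrating factor $\exp(\int_0^t K(s)\,ds)$, plugging the resulting $m(T)$ into $\tau = 1 + 2\gamma m(T)$, and rearranging yields $\tau = e^{K(0,T)\cdot T} + 2\gamma w^o$, matching the statement.

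The main obstacle is the verification step: one must confirm that the candidate $V$ is in fact the value function and that $\Theta^*$ is admissible. This reduces to (i) checking $a(t)<0$ on $[0,T]$, which is settled by the closed form above; (ii) establishing the square-integrability ${\rm E}\!\int_0^T \|\Theta_t^*\|^2\, dt<\infty$ so that the stochastic integrals in the wealth dynamics are true martingales, which requires moment estimates on $W_t^*$ obtained from the linear SDE it satisfies; and (iii) a standard localization/Itô argument applied to $V(t, W_t^*)$ to deduce optimality. The time-varying $\mu(t),\sigma(t)$ slightly complicate the ODE bookkeeping relative to the constant-coefficient case treated in \cite{Zhou2000Continuous}, but under the implicit regularity of $\mu(t)$ and $\sigma(t)$ no new ideas are needed.
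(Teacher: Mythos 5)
Your proposal is correct: the HJB equation, the quadratic ansatz with the decoupled ODEs for $a,b,c$, the constancy of $-b(t)/(2a(t))=\tau/(2\gamma)$, and the fixed-point computation of $\tau$ from $\tau=1+2\gamma\,{\rm E}(W_T^*)$ all check out and reproduce \eqref{equ:theta_n_cl}. The paper itself cites this lemma from \cite{Zhou2000Continuous} without supplying a proof, but your argument is essentially the same verification scheme the paper carries out in Appendix \ref{app:v_theta_n} for the exploratory counterpart (Theorem \ref{th:v_theta_n}), of which your derivation is the $\lambda\to 0$ specialization.
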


In Lemma \ref{th:theta_cl}, the optimal portfolio selection \eqref{equ:theta_n_cl} is related to three parts of parameters: $\mu-r$, $\Sigma^{-1}$ and $K(0,T)$.
Specifically, $\mu-r\in\mathbb{R}^{n\times 1}$ and $\Sigma^{-1}\in\mathbb{R}^{n\times n}$ are time-dependent representing the excess expected return vector and the inverse covariance matrix of the $n$ risky assets, respectively.
A change in the value of an element of $\mu-r$ or $\Sigma^{-1}$ exclusively influence the allocation associated with the corresponding risky assets \citep{Best1991sensitivity}.
In contrast, $K(0, T)\in\mathbb{R}$ remains constant during the whole planning investment horizon representing the average of squared Sharpe ratio of $n$ risky assets.
As the value of $K(0,T)$ increases, the amount invested in each risky asset is increased proportionally.
We will further explain the economic implications of $K(0,T)$ in Section \ref{se:K} later.

The optimal portfolio selection \eqref{equ:theta_n_cl}
in Lemma \ref{th:theta_cl} is the well-known pre-commitment portfolio selection \citep{Zhou2000Continuous, Li2000Optimal, Wang2010Numerical},
which shows superior performance within stable economic regimes \citep{Forsyth2020Multiperiod,Vigna2020time}.
When $T\to0$, the optimal portfolio selection at time 0 degenerates into
$\frac{1}{2\gamma}\Sigma^{-1}(\mu-r)$,
which is consistent with the static portfolio selection in single-period MV model \citep{Markowitz1956Optimization, Merton1972Analytical}.

\subsection{The average profitability of risky assets}\label{se:K}

In this subsection, we take a closer look at $K(0, T)$.
We explain the economic implications and then analyze the properties for it.

For the $i$-th risky asset, $\frac{\mu^{(i)}(t)-r}{\sigma^{(i)}(t)}$ is the Sharpe ratio of it at time $t$, $i=1,\dots,n$.
If $\frac{\mu^{(i)}(t)-r}{\sigma^{(i)}(t)}\neq0$, the investor can profit from buying or shorting the risky asset. 
The further $\frac{\mu^{(i)}(t)-r}{\sigma^{(i)}(t)}$ is from zero, the more the investor can earn with a share of the risky asset.
We define the square of $\frac{\mu^{(i)}(t)-r}{\sigma^{(i)}(t)}$ as $A^{(i)}(t)$, i.e.,
\begin{align*}
	A^{(i)}(t)=\Big(\dfrac{\mu^{(i)}(t)-r}{\sigma^{(i)}(t)}\Big)^2,
\end{align*}
which represents the current profitability of the $i$-th risky asset.
And, the average of $A^{(i)}(s)$ over the planning investment horizon $[t,T]$ is defined as $K^{(i)}(t,T)$,
\begin{align*}
	K^{(i)}(t,T)=\dfrac{1}{T-t}\int_t^T A^{(i)}(s) ds,
\end{align*}
which represents the average profitability of the $i$-th risky asset from $t$ to $T$. 

Similarly, in multi-asset financial market, $(\mu-r)^\top\Sigma^{-1} (\mu-r)$ is the squared Sharpe ratio of the $n$ risky assets.
We define $A(t)$ and $K(t,T)$ as 
\begin{equation}\label{equ:K}
\begin{split}
	A(t)=(\mu-r)^\top\Sigma^{-1} (\mu-r),
	\qquad
	K(t,T)=\dfrac{1}{T-t}\int_t^T A(s) ds,
\end{split}
\end{equation}
which represent the current and average profitability of $n$ risky assets, respectively.
In fact, when the financial market is stable, the average profitability of the $n$ risky assets $K(t,T)$ can be represented by the average profitability of each risky asset $K^{(i)}(t)$, $i=1,\dots,n$. 
We summarize this property into Theorem \ref{th:K_k}.

\begin{theorem}\label{th:K_k}
When the financial market is stable, i.e., model parameters $\mu, \Sigma$ are time-independent,
we have the following relation between the average profitability of $n$ risky assets and that of each risky asset, $\forall t\in[0,T)$,
\begin{align}\label{equ:K_k}
	K(t,T)
	=
\begin{bmatrix}
	\sqrt{K^{(1)}(t,T)} &\cdots &\sqrt{K^{(n)}(t,T)}~
\end{bmatrix}
L^{-1}
\begin{bmatrix}
	\sqrt{K^{(1)}(t,T)}~\\ \vdots\\ \sqrt{K^{(n)}(t,T)}
\end{bmatrix},
\end{align}
where the correlation coefficient matrix $L$ is defined in \eqref{equ:L}.
\end{theorem}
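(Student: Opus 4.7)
The plan is to reduce the stated identity to a pure matrix algebra statement about a single constant vector, using the decomposition $\Sigma=DLD^\top$ from \eqref{equ:DLD}--\eqref{equ:L} and the stability assumption to eliminate the time integration.

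First I would rewrite $A(t)$ in terms of the Sharpe-ratio vector. Since $D$ is diagonal with positive entries, we have $D^\top = D$, so $\Sigma=DLD$ and $\Sigma^{-1}=D^{-1}L^{-1}D^{-1}$. Substituting into \eqref{equ:K},
\begin{equation*}
A(t) = (\mu-r)^\top D^{-1} L^{-1} D^{-1}(\mu-r) = v(t)^\top L^{-1} v(t),
\end{equation*}
where $v(t)\in\mathbb{R}^n$ has components $v^{(i)}(t)=(\mu^{(i)}(t)-r)/\sigma^{(i)}(t)$, i.e., the individual Sharpe ratios. Under the stability hypothesis ($\mu,\Sigma$ time-independent), $v(t)\equiv v$ is constant, so $A(t)\equiv A$ is constant and the time-average trivializes:
\begin{equation*}
K(t,T)=\frac{1}{T-t}\int_t^T A(s)\,ds = A = v^\top L^{-1} v.
\end{equation*}

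Second I would do the same for each single-asset quantity. By the definition preceding \eqref{equ:K}, $A^{(i)}(t)=\bigl(v^{(i)}(t)\bigr)^2$, which is also constant under stability, so $K^{(i)}(t,T)=(v^{(i)})^2$. Taking the positive square root (here I need the standing convention that risky assets carry a positive excess return, so $v^{(i)}>0$; otherwise one simply reads $\sqrt{K^{(i)}(t,T)}$ as $|v^{(i)}|$ and notes that the sign ambiguity cancels in the quadratic form) gives $\sqrt{K^{(i)}(t,T)}=v^{(i)}$. Assembling the components back into a column vector yields precisely the vector $v$, so the right-hand side of \eqref{equ:K_k} is $v^\top L^{-1} v$, matching $K(t,T)$.

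The only non-mechanical step is the sign convention for $\sqrt{K^{(i)}}$, since the square root loses the sign of the underlying Sharpe ratio. I would address this by either invoking the positive-excess-return convention already implicit in the paper's portfolio discussion, or by remarking that the identity holds verbatim once $\sqrt{K^{(i)}(t,T)}$ is interpreted as the (signed) Sharpe ratio $v^{(i)}$; either reading makes the computation above rigorous, and no further estimates are required because the integral collapses to its integrand under the stability assumption.
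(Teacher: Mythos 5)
Your proof is correct and follows essentially the same route as the paper: both exploit the decomposition $\Sigma=DLD^\top$ to write $K=(\mu-r)^\top(DLD^\top)^{-1}(\mu-r)$ as a quadratic form in the Sharpe-ratio vector, observe that stability makes the time-average trivial, and identify $\mu^{(i)}-r=\sqrt{K^{(i)}}\,\sigma^{(i)}$ (the paper, like you, implicitly takes the positive root). One caveat on your parenthetical fallback: the claim that ``the sign ambiguity cancels in the quadratic form'' if one reads $\sqrt{K^{(i)}}$ as $|v^{(i)}|$ is false when the $v^{(i)}$ have mixed signs and $L^{-1}$ has nonzero off-diagonal entries, since the cross terms $v^{(i)}v^{(j)}$ and $|v^{(i)}||v^{(j)}|$ then differ; only your other two readings (a positive-excess-return convention, or interpreting $\sqrt{K^{(i)}}$ as the signed Sharpe ratio) make the identity hold.
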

\begin{proof}
See Appendix \ref{app:K_k}.
\end{proof}

\section{The Exploratory Portfolio Selection}\label{se:exploratory_model}

Due to the lack of information about the parameters $\mu-r$, $\Sigma^{-1}$ and $K(0,T)$, the RL agent explores the financial market with a exploratory portfolio selection.
In Section \ref{se:3.1}, following \cite{Wang2020Continuous}, we develop an exploratory formulation for the continuous-time multi-asset MV problem \eqref{model_cl}.
In Section \ref{se:multi}, we derive the optimal probability density function of the exploratory portfolio selection, whose expectation is the optimal MV portfolio selection in Lemma \ref{th:theta_cl}.
In Section \ref{se:alg_convergence}, we develop a policy iteration process to learn 
this optimal exploratory portfolio selection.

\subsection{Exploratory continuous-time MV model}\label{se:3.1}

The key idea of the exploratory formulation is to consider the randomness of the portfolio selection.
The RL agent chooses its $t$-time action (portfolio) by sampling from a multivariate probability density function $P(t,\cdot)$, which is called an exploratory portfolio selection, with the constraint
\begin{align*}
\int_{\mathbb{R}^n}P(t,\theta) d\theta=1.
\end{align*}

We first describe the discounted wealth process under the exploratory portfolio selection $P(t,\theta)$.
Let $\widetilde{W}_t$ denote the $t$-time discounted wealth.
Following \cite{Wang2019Exploration}, $\{\widetilde{W}_t\}_{0 \leqslant t \leqslant T}$ is the ``average'' of infinitely many wealth processes generated 
under the portfolios that are repeatedly sampled from the probability density function $\{P(t,\cdot)\}_{0 \leqslant t \leqslant T}$.
The discounted wealth process under the exploratory formulation is described by:
\begin{equation}\label{equ:wealth_RL}
	d\widetilde{W}_t 
	= \int_{\mathbb{R}^n}\theta^\top (\mu-r) P(t,\theta) d\theta \cdot dt 
	+\sqrt{\int_{\mathbb{R}^n}\theta^\top\Sigma\theta P(t,\theta) d\theta} \cdot d\widetilde{B}_t,
\end{equation}
in which $\{\widetilde{B}_t\}_{0 \leqslant t \leqslant T}$ is the standard one-dimensional Brownian motion.

Next, we describe the objective function in the exploratory continuous-time MV model.
To regulate the level of exploration, the information entropy $h(P(t,\cdot))$ \citep{Cover1991Elements,Mnih2016Asynchronous, Nachum2017Improving} is introduced:
\begin{align}\label{equ:entropy}
	h(P(t,\cdot)):=\int_{\mathbb{R}^n}-P(t,\theta)\ln{P(t,\theta)}d\theta.
\end{align}
More uncertainty of the exploratory portfolio selection corresponds to a larger value of information entropy.
When we are on the realm of classical continuous-time MV model, the probability density function $P(t,\cdot)$ is the Dirac measure, and the information entropy $h(P(t,\cdot))$ tends to $-\infty$.
In the exploratory formulation, we encourage the exploration and incorporate the accumulative information entropy $\mathcal{H}(P(\cdot,\cdot)):=\int_0^T h(P(t,\cdot))dt$ into the objective function of the classical continuous-time MV model \eqref{model_cl}.
In fact, the accumulative information entropy $\mathcal{H}(P(\cdot,\cdot))$ has already been used by \cite{Wang2020Continuous} and \cite{Dai2023Learninga} to regularize exploration for a continuous-time single-asset MV portfolio selection problem.
Then, the entropy-regularized optimization problem 
for the continuous-time multi-asset MV model
is formulated as:
\begin{equation}\label{model_rl}
	\max_{\{P(t,\cdot)\}} {\rm E}\Big(\widetilde{W}_T\Big)-\gamma {\rm Var}\Big(\widetilde{W}_T\Big)+\lambda \mathcal{H}(P(\cdot,\cdot)),
\end{equation}
where $\lambda~(\lambda>0)$ is the exploration weight,
and 
the discounted terminal wealth $\widetilde{W}_T$
is defined in \eqref{equ:wealth_RL}
under the exploratory portfolio selection 
$\{P(t,\cdot)\}_{0 \leqslant t \leqslant T}$.

\subsection{The gaussian exploration}\label{se:multi}

In order to solve the exploratory continuous-time MV model \eqref{model_rl}, the stochastic linear-quadratic optimal control model \eqref{model_cl_aux} is also transformed into the exploratory formulation:
\begin{equation}\label{model_rl_aux}
	\max_{\{P(t,\cdot)\}} {\rm E}\Big(-\gamma \widetilde{W}_T^2+\tau \widetilde{W}_T\Big)
	+\lambda \mathcal{H}(P(\cdot,\cdot))
\end{equation}
with $\tau=1+2\gamma {\rm E}\Big(\widetilde{W}_T^*\Big)$,
where $\{\widetilde{W}_t\}_{0\leqslant t \leqslant T}$ subjects to the process \eqref{equ:wealth_RL}
and $\widetilde{W}_T^*$ is the discounted wealth at terminal time $T$ with the optimal exploratory portfolio selection.
Now, we prove the equivalence of problem \eqref{model_rl} and \eqref{model_rl_aux} in Lemma \ref{le:rl_aux}.

\begin{lemma}\label{le:rl_aux}
For $t\in[0,T]$, suppose $P^*(t,\cdot)$ is the optimal exploratory portfolio selection for original problem \eqref{model_rl}. 
Then, $P^*(t,\cdot)$ is also optimal for auxiliary problem \eqref{model_rl_aux} with $\tau=1+2\gamma {\rm E}\Big(\widetilde{W}_T^*\Big)$.
\end{lemma}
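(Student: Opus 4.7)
The plan is to linearize the nonconvex functional $E(\widetilde W_T) - \gamma\operatorname{Var}(\widetilde W_T)$ around the optimizer by exploiting the identity $\operatorname{Var}(X) = E(X^2) - [E(X)]^2$, so that the quadratic term in $E(\widetilde W_T)$ becomes a linear term once $\tau$ is chosen correctly. This is the standard Lagrangian/embedding trick of \cite{Zhou2000Continuous} adapted to the entropy-regularized, exploratory setting; the entropy term enters additively and is identical in \eqref{model_rl} and \eqref{model_rl_aux}, so it plays no active role in the equivalence.

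First I would fix an arbitrary admissible exploratory selection $P$, write $m:=E(\widetilde W_T)$ and $m^*:=E(\widetilde W_T^*)$, and expand both objectives as
\begin{align*}
J_1(P)&:=E(\widetilde W_T)-\gamma\operatorname{Var}(\widetilde W_T)+\lambda\mathcal H(P(\cdot,\cdot))=-\gamma E(\widetilde W_T^{\,2})+\gamma m^2+m+\lambda\mathcal H(P(\cdot,\cdot)),\\
J_2(P,\tau)&:=E\bigl(-\gamma\widetilde W_T^{\,2}+\tau\widetilde W_T\bigr)+\lambda\mathcal H(P(\cdot,\cdot))=-\gamma E(\widetilde W_T^{\,2})+\tau m+\lambda\mathcal H(P(\cdot,\cdot)).
\end{align*}
Subtracting term-by-term gives the clean identity $J_1(P)-J_2(P,\tau)=\gamma m^2+(1-\tau)m$, which holds for every $P$ and every $\tau$.

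Next I would plug in the specific multiplier $\tau=1+2\gamma m^*$; then $(1-\tau)/(2\gamma)=-m^*$ and the right-hand side completes to a square,
\begin{equation*}
J_1(P)-J_2(P,\,1+2\gamma m^*)=\gamma(m-m^*)^2-\gamma(m^*)^2.
\end{equation*}
Applied at $P=P^*$ this yields $J_1(P^*)=J_2(P^*,1+2\gamma m^*)-\gamma(m^*)^2$. Combining the two displays with the optimality inequality $J_1(P^*)\ge J_1(P)$ (which holds by hypothesis for every admissible $P$) produces
\begin{equation*}
J_2(P^*,1+2\gamma m^*)\ge J_2(P,1+2\gamma m^*)+\gamma(m-m^*)^2\ge J_2(P,1+2\gamma m^*),
\end{equation*}
which is exactly the assertion that $P^*$ maximizes \eqref{model_rl_aux} with $\tau=1+2\gamma E(\widetilde W_T^*)$.

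The only subtlety, and the point I would flag rather than call a real obstacle, is the apparent circularity in the definition $\tau=1+2\gamma E(\widetilde W_T^*)$: the multiplier depends on the optimal wealth of the very problem whose optimizer we are characterizing. I would emphasize that Lemma \ref{le:rl_aux} is a verification statement, not an existence statement — it takes as input any $P^*$ optimal for \eqref{model_rl} and outputs the matching value of $\tau$ for which \eqref{model_rl_aux} has the same optimizer. Well-definedness of $m^*=E(\widetilde W_T^*)$ is covered by the standing admissibility assumptions on \eqref{equ:wealth_RL} that guarantee a square-integrable strong solution, so no additional hypotheses are needed.
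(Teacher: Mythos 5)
Your proof is correct. It rests on the same linearization idea as the paper's — replace the $\gamma[\mathrm{E}(\widetilde W_T)]^2$ term coming from the variance by its tangent at $m^*=\mathrm{E}(\widetilde W_T^*)$, which is exactly what the choice $\tau=1+2\gamma m^*$ accomplishes — but the packaging differs. The paper argues by contradiction: it assumes $P^*$ is not optimal for \eqref{model_rl_aux}, invokes the convexity of $f(x,y)=-\gamma x+\gamma y^2+y$ through the supporting-hyperplane inequality $f(x,y)\geqslant f(x_0,y_0)+f_x(x_0,y_0)(x-x_0)+f_y(x_0,y_0)(y-y_0)$ at $(x_0,y_0)=(\mathrm{E}((\widetilde W_T^*)^2),m^*)$, and derives that the competing $P$ would strictly beat $P^*$ in \eqref{model_rl}. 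You instead prove the statement directly via the exact identity $J_1(P)-J_2(P,\tau)=\gamma(m-m^*)^2-\gamma(m^*)^2$, which is precisely the paper's convexity inequality with the remainder term retained (for a function quadratic in $y$ and linear in $x$, the supporting-hyperplane bound holds with equality up to $\gamma(y-y_0)^2$). Your version is arguably cleaner: it avoids the contrapositive, and the retained remainder gives the quantitative conclusion $J_2(P^*,\tau)\geqslant J_2(P,\tau)+\gamma(m-m^*)^2$, i.e., any competitor in the auxiliary problem is suboptimal by at least the squared gap in expected terminal wealth. Your remark on the apparent circularity of $\tau$ is apt and consistent with how the paper uses the lemma (the fixed point for $\tau$ is resolved later, in Theorem \ref{th:v_theta_n}, where $\tau=e^{K(0,T)\cdot T}+2\gamma w^o$ is computed).
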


\begin{proof}
See Appendix \ref{app:rl_aux}.
\end{proof}

According to Lemma \ref{le:rl_aux}, any optimal solution of model \eqref{model_rl} can be found via solving the stochastic linear-quadratic model \eqref{model_rl_aux}.
Hence, in the following of this paper, we focus on model \eqref{model_rl_aux}.
For $\forall(t,w)\in[0,T]\times \mathbb{R}$, we define the optimal value function
\begin{equation}\label{model_rl_aux_subproblem}
\begin{split}
	V^{*}(t,w)&=\max_{\{P(s,\cdot)\}}{\rm E}\Big(-\gamma \widetilde{W}_T^2+\tau \widetilde{W}_T\Big)
	+\lambda\int_t^T h(P(s,\cdot))ds
\end{split}
\end{equation}
with $h(P(s,\cdot))$, $s\in[t,T]$, defined in \eqref{equ:entropy}.
Following the principle of dynamic programming, we deduce that $V^*(t,w)$ satisfies the Hamilton-Jacobi-Bellman (HJB) equation
\begin{equation}\label{equ:HJB}
\begin{split}
  -\dfrac{\partial V^*}{\partial t}(t,w)
	=&\max_{P(t,\cdot)}\Big\{ \lambda  h(P(t,\cdot))\\
	&+\dfrac{\partial V^*}{\partial w}(t,w)\int_{\mathbb{R}^n}\theta^\top (\mu-r) P(t,\theta) d\theta
  +\dfrac{1}{2}\dfrac{\partial^2 V^*}{\partial w^2}(t,w)\int_{\mathbb{R}^n}\theta^\top\Sigma\theta P(t,\theta) d\theta\Big\}
\end{split}
\end{equation}
with the terminal condition $V^*(T,w)=-\gamma w^2+\tau w$.
Then, applying the high dimensional Euler-Lagrange equation \citep{Liberzon2012Calculus} to HJB equation \eqref{equ:HJB}, the optimal exploratory portfolio selection $P^*(t,\theta)$ can be obtained in Theorem \ref{th:v_theta_n}.

\begin{theorem}\label{th:v_theta_n}
For $\forall(t,w) \in [0,T] \times \mathbb{R}$, the optimal exploratory portfolio selection of model \eqref{model_rl_aux} is Gaussian, whose density function is
\begin{equation}\label{equ:theta_n}
	P^*(t,\cdot)
	=\mathcal{N}\Big((\dfrac{\tau}{2\gamma}-w)\Sigma^{-1}(\mu-r),\dfrac{\lambda}{2}\dfrac{e^{K(t,T)\cdot (T-t)}}{\gamma}\Sigma^{-1}\Big).
\end{equation}
The corresponding optimal value function is given by
\begin{equation}\label{equ:V_n}
\begin{split}
	V^*(t,w)
	&=-\gamma e^{-K(t,T)\cdot (T-t)}\left(w - \frac{\tau}{2\gamma}\right)^2
	+\frac{\tau^2}{4\gamma}\\
	&+\frac{\lambda n}{2}\int_{t}^{T}\left[\ln\left(\frac{\pi\lambda}{\gamma}\right)+\frac{1}{n}\ln(|\Sigma^{-1}|)+K(s,T)\cdot(T-s)\right]ds
\end{split}
\end{equation}
with $K(t,T)$ and $K(s,T)$ defined in \eqref{equ:K}.
Moreover, $\tau=e^{K(0,T)\cdot T}+2\gamma w^o$.
\end{theorem}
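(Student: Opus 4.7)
The plan is to solve the HJB equation \eqref{equ:HJB} in two stages: first resolve the inner variational problem over $P(t,\cdot)$ for arbitrary partial derivatives, then use the resulting reduced PDE to identify $V^*$ via a quadratic ansatz. For the inner problem, fix $(t,w)$ and write $a:=\frac{\partial V^*}{\partial w}(t,w)$ and $b:=\frac{1}{2}\frac{\partial^2 V^*}{\partial w^2}(t,w)$. The inner objective
\[
\lambda h(P(t,\cdot))+a\int_{\mathbb{R}^n}\theta^\top(\mu-r)P(t,\theta)d\theta+b\int_{\mathbb{R}^n}\theta^\top\Sigma\theta\, P(t,\theta)d\theta,
\]
subject to $\int P=1$, is a maximum-entropy problem with linear and quadratic moment functionals. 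Introducing a Lagrange multiplier for the normalization and applying the Euler–Lagrange condition (or equivalently pointwise differentiation in $P$) gives, provided $b<0$,
\[
P^*(t,\theta)\propto\exp\Bigl(\tfrac{1}{\lambda}\bigl(a\theta^\top(\mu-r)+b\theta^\top\Sigma\theta\bigr)\Bigr).
\]
Completing the square in the exponent identifies this as a Gaussian density with mean $m=-\frac{a}{2b}\Sigma^{-1}(\mu-r)$ and covariance $-\frac{\lambda}{2b}\Sigma^{-1}$.

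Next I would substitute this optimizer back into \eqref{equ:HJB}. Using the standard formula $h(P^*)=\frac{1}{2}\ln\bigl((2\pi e)^n\det(-\tfrac{\lambda}{2b}\Sigma^{-1})\bigr)$ and the Gaussian moment identities, the inner maximum collapses to an explicit expression in $a$, $b$, $\mu-r$, $\Sigma$. The HJB equation then becomes a nonlinear PDE in $V^*(t,w)$ whose $w$-dependence suggests the quadratic ansatz
\[
V^*(t,w)=\alpha(t)(w-\tfrac{\tau}{2\gamma})^2+\beta(t),
\]
motivated both by the terminal condition $V^*(T,w)=-\gamma w^2+\tau w=-\gamma(w-\tfrac{\tau}{2\gamma})^2+\tfrac{\tau^2}{4\gamma}$ and by the classical solution in Lemma \ref{th:theta_cl}. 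Plugging the ansatz into the reduced HJB, matching the coefficient of $(w-\tfrac{\tau}{2\gamma})^2$ yields an ODE for $\alpha(t)$ of Riccati/linear type whose solution is $\alpha(t)=-\gamma e^{-K(t,T)\cdot(T-t)}$, while matching the $w$-independent part gives an ODE for $\beta(t)$ that integrates, using the fundamental theorem of calculus, to the integral form displayed in \eqref{equ:V_n}. I would verify the terminal data $\alpha(T)=-\gamma$ and $\beta(T)=\tfrac{\tau^2}{4\gamma}$ (here using that $K(t,T)\cdot(T-t)\to 0$ as $t\to T$).

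Finally, substituting these $\alpha(t),\beta(t)$ back into $a,b$, a short computation gives $-\frac{a}{2b}=\frac{\tau}{2\gamma}-w$ and $-\frac{\lambda}{2b}=\frac{\lambda}{2\gamma}e^{K(t,T)(T-t)}$, producing precisely the mean and covariance in \eqref{equ:theta_n}. To close the formula for $\tau$, I would compute $\mathbb{E}(\widetilde W_T^*)$ under the exploratory dynamics \eqref{equ:wealth_RL} with $P^*$ inserted: the drift reduces to $m^\top(\mu-r)=(\tfrac{\tau}{2\gamma}-\widetilde W_t)\,A(t)$, so the mean wealth solves a linear ODE with integrating factor $e^{\int_0^t A(s)ds}=e^{K(0,t)\cdot t}$, yielding $\mathbb{E}(\widetilde W_T^*)=\tfrac{\tau}{2\gamma}-(\tfrac{\tau}{2\gamma}-w^o)e^{-K(0,T)\cdot T}$; imposing $\tau=1+2\gamma\mathbb{E}(\widetilde W_T^*)$ and solving for $\tau$ gives $\tau=e^{K(0,T)\cdot T}+2\gamma w^o$. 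The main technical obstacle I anticipate is book-keeping in the reduced HJB after substitution: showing that the correction terms from the entropy and Gaussian second moment combine cleanly so that $\alpha(t)$ is governed purely by $A(t)=(\mu-r)^\top\Sigma^{-1}(\mu-r)$, which is what allows the exponent to integrate to $K(t,T)\cdot(T-t)$ rather than to a more complicated time-inhomogeneous expression.
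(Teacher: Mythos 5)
Your proposal is correct and follows essentially the same route as the paper's Appendix D proof: an Euler--Lagrange/maximum-entropy resolution of the inner optimization in the HJB equation yielding a Gibbs-form Gaussian, a quadratic-in-$w$ ansatz for $V^*$ reduced to ODEs in $t$, and finally the fixed-point equation $\tau=1+2\gamma\,{\rm E}(\widetilde W_T^*)$ solved via the linear ODE for the expected exploratory wealth. The only cosmetic difference is that you center the quadratic ansatz at $\tau/(2\gamma)$ from the outset, whereas the paper uses the uncentered form $-I(t)w^2+H(t)w+G(t)$ and observes afterwards that $H(t)/(2I(t))$ is constant; the two are equivalent here.
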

\begin{proof}
	See Appendix \ref{app:v_theta_n}.
\end{proof}


\subsection{Policy evaluation and policy improvement procedure}\label{se:alg_convergence}

In this section, 
we employ a policy iteration procedure to learn the optimal exploratory portfolio selection \eqref{equ:theta_n}.
A policy iteration procedure usually consists of two circularly ongoing steps:
policy evaluation and policy improvement \citep{Sutton2018Reinforcement}.
The former provides an estimated value function for the current policy, whereas the latter updates the current policy in the right direction to improve the value function.
In this subsection, we first develop the policy evaluation and policy improvement theorem for the multi-asset exploratory MV portfolio selection with time-varying financial markets,
and then present a convergence analysis for it.

A raw indicator for evaluating the exploratory portfolio selection $P(t,\cdot)$ is the value function 
\begin{equation}
\begin{split}
	V^P(t,w):={\rm E}\Big(-\gamma \widetilde{W}_T^2+\tau^p \widetilde{W}_T\Big)
	+\lambda\int_t^T h(P(s,\cdot))ds,
\end{split}
\end{equation}
with $\tau^P=1+2\gamma {\rm E}\Big(\widetilde{W}_T\Big)$.
Lemma \ref{le:policy_evaluation} shows the explicit expression of the value function $V^P(t,w)$ after a exploratory portfolio selection $P(t,\cdot)$ is given.

\begin{lemma}[Policy evaluation]\label{le:policy_evaluation}
Let $P(t,\cdot)=\mathcal{N}\Big((a_0-w){\bm a_1},e^{a_2}{\bm A_3}\Big)$, $\forall t\in[0,T]$ be an arbitrarily given probability density function,
where $a_0\in\mathbb{R}$, ${\bm a_1}\in\mathbb{R}^n$, $a_2\in\mathbb{R}$, ${\bm A_3}\in\mathbb{R}^{n\times n}$ are time-dependent.
The terminal wealth $\widetilde{W}_T$ 
is defined in \eqref{equ:wealth_RL}
under the exploratory portfolio selection 
$\{P(t,\cdot)\}_{0 \leqslant t \leqslant T}$
with the initial wealth $w^o$.
We have
\begin{itemize}
	\item[(i)] 
${\rm E}\Big(\widetilde{W}_T\Big)$
can be expressed as
\begin{equation}\label{equ:tau_update}
{\rm E}\Big(\widetilde{W}_T\Big) = 
	e^{\int_0^T -{\bm a_1}^\top (\mu-r)ds}
	\Big(\int_0^T a_0{\bm a_1}^\top(\mu-r)e^{\int_0^s {\bm a_1}^\top (\mu-r)dk }ds+w^o \Big).
\end{equation}
	\item[(ii)]
the value function $V^P(t,w)$ can be presented in the form of a quadratic polynomial regarding $w$,
\begin{align*}
	V^P(t,w)
	=-I^{P}(t)\left(w - \frac{H^{P}(t)}{2I^{P}(t)}\right)^{2}
	+\frac{(H^{P}(t))^{2}}{4I^{P}(t)}+G^{P}(t),
\end{align*}
where
\begin{align*}
	&I^P(t)=\gamma e^{\int_{t}^T-\left(2{\bm a_1}^{\top}(\mu - r)-{\bm a_1}^{\top}\Sigma{\bm a_1}\right)ds},\\
	&H^P(t)=e^{\int_{t}^T-{\bm a_1}^{\top}(\mu - r)ds}\left[\tau^P
	+2\gamma\int_{t}^T a_0\left({\bm a_1}^{\top}(\mu - r)-{\bm a_1}^{\top}\Sigma{\bm a_1}\right)e^{\int_{s}^T-\left({\bm a_1}^{\top}(\mu - r)-{\bm a_1}^{\top}\Sigma{\bm a_1}\right)du}ds\right],
\\
	&G^P(t)=\int_t^T
	\Big[H^P(s)a_0{\bm a_1}^\top(\mu-r)-I^P(s)a_0^2{\bm a_1}^\top\Sigma{\bm a_1}\\
	&\hspace{+2cm}
	+\dfrac{\lambda n}{2}\ln{(2\pi e)}+\dfrac{\lambda n}{2}a_2+\dfrac{\lambda}{2}\ln{|{\bm A_3}|} -I^P(s) e^{a_2}{\rm tr}(\Sigma {\bm A_3})
	\Big] ds.
\end{align*}
\end{itemize}
\end{lemma}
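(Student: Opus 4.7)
The plan is to exploit the fact that, once the Gaussian policy $P(t,\cdot)=\mathcal{N}((a_0-w)\bm{a}_1,\,e^{a_2}\bm{A}_3)$ is substituted into \eqref{equ:wealth_RL}, the wealth process $\widetilde{W}_t$ becomes a one-dimensional diffusion whose drift is affine in $\widetilde{W}_t$ and whose squared diffusion coefficient is quadratic in $\widetilde{W}_t$. Concretely, evaluating the two integrals in \eqref{equ:wealth_RL} using the mean and the second moment of the Gaussian yields
\[
d\widetilde{W}_t=(a_0-\widetilde{W}_t)\bm{a}_1^{\top}(\mu-r)\,dt+\sqrt{(a_0-\widetilde{W}_t)^{2}\bm{a}_1^{\top}\Sigma\bm{a}_1+e^{a_2}{\rm tr}(\Sigma\bm{A}_3)}\,d\widetilde{B}_t.
\]
Both ${\rm E}(\widetilde{W}_T)$ and $V^P(t,w)$ are then accessible in closed form from this SDE.

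For part (i), I would take expectations in the SDE above, which kills the martingale part and gives the linear first-order ODE
\[
\tfrac{d}{dt}{\rm E}(\widetilde{W}_t)+\bm{a}_1^{\top}(\mu-r)\,{\rm E}(\widetilde{W}_t)=a_0\bm{a}_1^{\top}(\mu-r),\qquad {\rm E}(\widetilde{W}_0)=w^o.
\]
Multiplying by the integrating factor $\exp\bigl(\int_0^t\bm{a}_1^{\top}(\mu-r)ds\bigr)$ and integrating from $0$ to $T$ reproduces \eqref{equ:tau_update} directly.

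For part (ii), I would write down the Feynman--Kac PDE satisfied by the pre-fixed-policy value function,
\[
V^P_t+(a_0-w)\bm{a}_1^{\top}(\mu-r)V^P_w+\tfrac{1}{2}\bigl[(a_0-w)^{2}\bm{a}_1^{\top}\Sigma\bm{a}_1+e^{a_2}{\rm tr}(\Sigma\bm{A}_3)\bigr]V^P_{ww}+\lambda h(P(t,\cdot))=0,
\]
with terminal data $V^P(T,w)=-\gamma w^{2}+\tau^{P}w$. Since all PDE coefficients are polynomial of degree at most two in $w$ and the terminal datum is quadratic, I make the ansatz $V^P(t,w)=-I^P(t)w^{2}+H^P(t)w+G^P(t)$, substitute, and match coefficients of $w^{2}$, $w^{1}$, and $w^{0}$. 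This yields a triangular system of first-order ODEs with terminal conditions $I^P(T)=\gamma$, $H^P(T)=\tau^P$, $G^P(T)=0$. The $I^P$-equation is homogeneous linear and integrates to the stated exponential; the $H^P$-equation is inhomogeneous linear forced by $I^P$ and is solved by the integrating factor $\exp\bigl(\int\bm{a}_1^{\top}(\mu-r)ds\bigr)$; the $G^P$-equation is a direct antiderivative, in which the Gaussian differential entropy $h(P(t,\cdot))=\tfrac{n}{2}\ln(2\pi e)+\tfrac{n}{2}a_2+\tfrac12\ln|\bm{A}_3|$ supplies the three logarithmic terms in the stated formula. Completing the square in $w$ finally gives the vertex form displayed in the lemma.

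The routine parts are the three ODE integrations. The main bookkeeping obstacle will be manipulating the integrating factors so that the solution for $H^P$ aligns with the precise nested-exponential representation in the statement; a useful structural observation is that the noise contribution $e^{a_2}{\rm tr}(\Sigma\bm{A}_3)$ enters only the constant-in-$w$ balance, so the variance parameters $a_2$ and $\bm{A}_3$ affect only $G^P$, which is exactly why the $I^P$ and $H^P$ ODEs are free of them and the closed-form expressions decouple cleanly.
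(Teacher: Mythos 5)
Your proposal is correct and follows essentially the same route as the paper's own proof: part (i) by reducing the exploratory wealth SDE to a linear ODE for ${\rm E}(\widetilde{W}_t)$ and solving with an integrating factor, and part (ii) by writing the Feynman--Kac PDE with the Gaussian entropy term, making the quadratic ansatz $V^P(t,w)=-I^P(t)w^2+H^P(t)w+G^P(t)$, and matching coefficients to obtain the triangular ODE system with terminal data $I^P(T)=\gamma$, $H^P(T)=\tau^P$, $G^P(T)=0$. Your observation that the variance parameters $a_2,\bm{A}_3$ enter only the $G^P$ equation is consistent with the paper's computation.
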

{\comm
\begin{proof}
	See Appendix \ref{app:policy_evaluation}.
\end{proof}
}

When it comes to the policy improvement, another exploratory portfolio selection $\widetilde{P}(t,\cdot)$, constructed by $V^P(t,w)$, is introduced to enhance the value function.
$\widetilde{P}(t,\cdot)$ facilitates the improvement of the original exploratory portfolio selection $P(t,\cdot)$ under the given financial market,
and the formulation of $\widetilde{P}(t,\cdot)$ is shown in Lemma \ref{le:policy_improvement}.

\begin{lemma}[Policy improvement]
	\label{le:policy_improvement}
Let $P(t,\cdot)$, $\forall t\in[0,T]$, be an arbitrarily given exploratory portfolio selection, and $V^P(t,w)$ be its value function.
We define another exploratory portfolio selection
\begin{align}\label{equ:tilde_P}
	\widetilde{P}(t,\cdot)
	=\mathcal{N}\Big(\dfrac{\frac{\partial V^{P}}{\partial w}(t,w)}{-\frac{\partial^2 V^{P}}{\partial w^2}(t,w)}\Sigma^{-1}(\mu-r),
	\dfrac{\lambda}{-\frac{\partial^2 V^{P}}{\partial w^2}(t,w)}\Sigma^{-1}\Big),
\end{align}
whose value function is $V^{\widetilde{P}}(t,w)$ with terminal condition $V^{\widetilde{P}}(T,w)=V^P(T,w)$.
Then, we have
\begin{align*}
	V^{\widetilde{P}}(t,w)\geq V^{P}(t,w),\quad \forall (t,w)\in[0,T]\times \mathbb{R}.
\end{align*}
\end{lemma}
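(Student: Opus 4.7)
My plan is to follow the standard policy-improvement argument from stochastic control. By the Feynman-Kac formula applied to the exploratory wealth dynamics \eqref{equ:wealth_RL}, both $V^P$ and $V^{\widetilde{P}}$ satisfy linear parabolic PDEs of the form
\begin{equation*}
\frac{\partial V^Q}{\partial t}+b^Q(t)\frac{\partial V^Q}{\partial w}+\tfrac{1}{2}\sigma^Q(t)\frac{\partial^2 V^Q}{\partial w^2}+\lambda h(Q(t,\cdot))=0,\qquad V^Q(T,w)=-\gamma w^2+\tau^P w,
\end{equation*}
for $Q\in\{P,\widetilde{P}\}$, where $b^Q(t):=\int_{\mathbb{R}^n}\theta^\top(\mu-r)Q(t,\theta)d\theta$ and $\sigma^Q(t):=\int_{\mathbb{R}^n}\theta^\top\Sigma\theta\,Q(t,\theta)d\theta$; the terminal data coincide by hypothesis. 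Writing $\Delta:=V^{\widetilde{P}}-V^P$, I aim to show $\Delta\ge 0$ given $\Delta(T,\cdot)\equiv 0$.

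Next I apply Itô's formula to $V^P(s,\widetilde{W}_s)$ along the wealth process $\widetilde{W}$ generated by the improved policy $\widetilde{P}$, integrate from $t$ to $T$, take conditional expectations given $\widetilde{W}_t=w$, and eliminate $\partial V^P/\partial s$ using the PDE for $V^P$ above. After rearranging the terms one obtains
\begin{equation*}
V^{\widetilde{P}}(t,w)-V^P(t,w)=\int_t^T \mathbb{E}\bigl[\mathcal{J}(\widetilde{P};s,\widetilde{W}_s)-\mathcal{J}(P;s,\widetilde{W}_s)\bigr]ds,
\end{equation*}
where the pointwise Hamiltonian attached to $V^P$ is
\begin{equation*}
\mathcal{J}(Q;s,w):=\lambda h(Q(s,\cdot))+\frac{\partial V^P}{\partial w}(s,w)\,b^Q(s)+\tfrac{1}{2}\frac{\partial^2 V^P}{\partial w^2}(s,w)\,\sigma^Q(s).
\end{equation*}
It therefore suffices to establish that $\widetilde{P}$ is the pointwise maximizer of $Q\mapsto\mathcal{J}(Q;s,w)$ over probability densities on $\mathbb{R}^n$.

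Verifying this last claim is the main step, but it reduces to an entropy-regularized quadratic program: maximize $\int_{\mathbb{R}^n}[-\lambda Q\ln Q+a\,\theta^\top(\mu-r)\,Q+b\,\theta^\top\Sigma\theta\,Q]\,d\theta$ subject to $\int Q\,d\theta=1$, with $a=\partial_w V^P(s,w)$ and $b=\tfrac{1}{2}\partial_{ww}^2 V^P(s,w)$. The Euler-Lagrange condition yields $Q(\theta)\propto\exp\bigl((a\,\theta^\top(\mu-r)+b\,\theta^\top\Sigma\theta)/\lambda\bigr)$; completing the square in $\theta$ and normalizing identifies this density precisely with the Gaussian in \eqref{equ:tilde_P}. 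Strict concavity of $h$ together with $\partial_{ww}^2 V^P(s,w)=-2I^P(s)<0$ (Lemma \ref{le:policy_evaluation}) makes $Q\mapsto\mathcal{J}(Q;s,w)$ strictly concave, so this critical point is the unique global maximum, giving $\mathcal{J}(\widetilde{P};s,w)\ge\mathcal{J}(P;s,w)$ and hence $\Delta(t,w)\ge 0$. The main obstacle I anticipate is more technical than conceptual: one must justify that the Itô stochastic-integral term has zero expectation, which follows from the fact that $\widetilde{W}$ is a linear-Gaussian diffusion under $\widetilde{P}$ and $\partial_w V^P$ is linear in $w$ with coefficients bounded on $[0,T]$, so all the required moments are finite.
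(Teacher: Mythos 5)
Your proposal is correct and follows essentially the same route as the paper's proof: apply It\^o's formula to $V^P$ along the wealth process generated by $\widetilde{P}$, eliminate $\partial_t V^P$ via the Feynman--Kac equation for $V^P$, and observe that $\widetilde{P}$ is the pointwise maximizer of the entropy-regularized Hamiltonian built from $\partial_w V^P$ and $\partial^2_{ww}V^P$, so the resulting integrand is nonnegative. Your write-up is in fact somewhat more careful than the paper's (explicit martingale justification and strict concavity), but the underlying argument is identical.
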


\begin{proof}
	See Appendix \ref{app:policy_improvement}.
\end{proof}

Lemma \ref{le:policy_evaluation} and Lemma \ref{le:policy_improvement} suggest that, when choosing an initial exploratory portfolio selection within Gaussian distribution, there are always policies in the Gaussian family capable of completing the policy iteration procedure.
We denote the initial exploratory portfolio selection by $P_0(t,\dot)$, while $V^{P_0}(t,w)$ is the initial value function obtained by the policy evaluation in Lemma \ref{le:policy_evaluation}.
According to Lemma \ref{le:policy_improvement}, the exploratory portfolio selection in the first iteration is updated into $P_1(t,\dot)$.
Proceeding in a step-by-step iterative manner, the sequence of exploratory portfolio selection $\{P_m(t,\cdot)\}$ and the corresponding value function $\{V^{P_m}(t,w)\}$ are obtained, for $m=1,2,\dots$.
It turns out in Theorem \ref{th:sac} that, the sequence of $\{P_m(t,\cdot)\}$ will converge to the optimal exploratory portfolio selection \eqref{equ:theta_n} when $m\rightarrow\infty$.

\begin{theorem}\label{th:sac}
Let $P_0(t,\cdot)=\mathcal{N}\Big((a_0-w){\bm a_1},e^{a_2}{\bm A_3}\Big)$, $\forall t\in[0,T]$, be an arbitrarily given initial exploratory portfolio selection,
where $a_0\in\mathbb{R}$, ${\bm a_1}\in\mathbb{R}^n$, $a_2\in\mathbb{R}$,
${\bm A_3}\in\mathbb{R}^{n\times n}$ are time-dependent.
Then, for the sequence of $\{P_m(t,\cdot)\}$ and the value function $\{V^{P_m}(t,\cdot)\}$, $m=0,1,2,\dots$, we have
\begin{align*}
	&\lim_{m\to\infty}P_m(t,\theta)=P^*(t,\theta),\\
	&\lim_{m\to\infty}V^{P_m}(t,w)=V^*(t,w),
\end{align*}
where $P^*(t,\theta)$ and $V^*(t,w)$ are defined in Theorem \ref{th:v_theta_n}.
\end{theorem}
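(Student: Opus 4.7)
The plan is to exploit the fact that policy improvement preserves a very rigid Gaussian structure, which reduces the convergence question to a scalar affine recursion. Whatever the initial Gaussian $P_0$ is, Lemma~\ref{le:policy_improvement} immediately produces $P_1(t,\cdot)=\mathcal{N}\bigl((b_1(t)-w)\Sigma^{-1}(\mu-r),\,c_1(t)\Sigma^{-1}\bigr)$, and inductively every subsequent $P_m$ with $m\geq1$ inherits this form. Consequently, from iteration~$1$ onward the mean coefficient is $\bm{a}_1=\Sigma^{-1}(\mu-r)$, which yields the crucial identity $\bm{a}_1^{\top}(\mu-r)=\bm{a}_1^{\top}\Sigma\bm{a}_1=A(s)$ used in all simplifications below.

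Substituting this identity into the formulas of Lemma~\ref{le:policy_evaluation}(ii), the cross term $\bm{a}_1^{\top}(\mu-r)-\bm{a}_1^{\top}\Sigma\bm{a}_1$ vanishes, giving
\begin{align*}
    I^{P_m}(t)=\gamma e^{-K(t,T)(T-t)},\qquad H^{P_m}(t)=\tau^{P_m}e^{-K(t,T)(T-t)},\qquad m\geq1.
\end{align*}
Plugging these into Lemma~\ref{le:policy_improvement} gives $b_{m+1}=\tau^{P_m}/(2\gamma)$ (already constant in $t$) and $c_{m+1}(t)=\lambda e^{K(t,T)(T-t)}/(2\gamma)$. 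Comparing with Theorem~\ref{th:v_theta_n}, the covariance of $P_m$ and the coefficient $I^{P_m}$ already match their optimal counterparts after a single iteration, so the only parameter still evolving is the scalar $\tau^{P_m}$. Using Lemma~\ref{le:policy_evaluation}(i) with $a_0=\tau^{P_m}/(2\gamma)$ and $\bm{a}_1^{\top}(\mu-r)=A(s)$, a short computation produces ${\rm E}(\widetilde{W}_T)=\tau^{P_m}(1-\alpha)/(2\gamma)+w^{o}\alpha$ and hence the affine recursion
\begin{align*}
    \tau^{P_{m+1}}=(1-\alpha)\,\tau^{P_m}+\bigl(1+2\gamma w^{o}\alpha\bigr),\qquad \alpha:=e^{-K(0,T)\cdot T}\in(0,1).
\end{align*}
Its unique fixed point is precisely $\tau=e^{K(0,T)\cdot T}+2\gamma w^{o}$, and since $|1-\alpha|<1$ the convergence is geometric; therefore $b_m\to\tau/(2\gamma)$ and $P_m(t,\theta)\to P^{*}(t,\theta)$ pointwise.

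To finish, I would substitute the limiting parameters into the quadratic representation of $V^{P_m}$ from Lemma~\ref{le:policy_evaluation}(ii). The quadratic term $-I^{P_m}\bigl(w-H^{P_m}/(2I^{P_m})\bigr)^{2}$ immediately yields $-\gamma e^{-K(t,T)(T-t)}(w-\tau/(2\gamma))^{2}$, matching the leading term of \eqref{equ:V_n}. I expect the main technical obstacle to be reconciling the remaining pieces $(H^{P_m})^{2}/(4I^{P_m})+G^{P_m}(t)$ with the expression $\tau^{2}/(4\gamma)+\frac{\lambda n}{2}\int_t^{T}\bigl[\ln(\pi\lambda/\gamma)+\tfrac{1}{n}\ln|\Sigma^{-1}|+K(s,T)(T-s)\bigr]ds$: the logarithmic pieces of $G^{P_m}$ collapse after using $a_2=\ln(\lambda/(2\gamma))+K(s,T)(T-s)$ and $\bm{A}_3=\Sigma^{-1}$ and collecting logarithms, while the quadratic-in-$\tau$ pieces combine via the identity $\int_t^{T}A(s)\,e^{-K(s,T)(T-s)}\,ds=1-e^{-K(t,T)(T-t)}$, which is a direct consequence of $\frac{d}{ds}e^{-K(s,T)(T-s)}=A(s)e^{-K(s,T)(T-s)}$. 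These identifications close the argument and deliver $V^{P_m}(t,w)\to V^{*}(t,w)$.
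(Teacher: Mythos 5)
Your proposal is correct and follows essentially the same route as the paper's proof: policy improvement forces every iterate from $P_1$ onward into the Gaussian family with mean coefficient $\Sigma^{-1}(\mu-r)$, which collapses the evaluation formulas via $\bm{a}_1^{\top}(\mu-r)=\bm{a}_1^{\top}\Sigma\bm{a}_1=A(s)$ and reduces the whole question to the affine recursion for $\tau^{P_m}$ with fixed point $e^{K(0,T)\cdot T}+2\gamma w^{o}$, exactly as in Appendix F. If anything you are slightly more careful than the paper, which states the limit of the recursion without observing that the contraction factor $1-e^{-K(0,T)\cdot T}$ lies in $[0,1)$; the only small imprecision on your side is that the covariance matches its optimal counterpart only from $P_2$ onward (the covariance of $P_1$ still carries the initial $\bm{a}_1$ through $e^{\int_t^T(2\bm{a}_1^{\top}(\mu-r)-\bm{a}_1^{\top}\Sigma\bm{a}_1)\,ds}$), not after a single improvement step, which does not affect the limit.
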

\begin{proof}
	See Appendix \ref{app:sac}.
\end{proof}

\section{SAC Algorithm Design}\label{se:sac_alg}

The previous discussion about the policy iteration procedure provides clear theoretical guidance for learning the optimal multi-asset MV portfolio selection.
In this section, we develop a reinforcement learning (RL) algorithm, the online Soft Actor-Critic (SAC) algorithm, to translate the theoretical guidance into practical solutions.

According to Lemma \ref{th:theta_cl}, the optimal multi-asset MV portfolio selection 
is intrinsically linked to the parameters 
$\mu-r\in\mathbb{R}^{n\times 1}$, $\Sigma^{-1}\in\mathbb{R}^{n\times n}$, and $K(0,T)\in\mathbb{R}$.
Learning the optimal portfolio selection boils down to learning these three parts of parameters.
However, in the context of multi-asset portfolio selection problem,
learning all parameters in these three parts simultaneously faces challenge.
It leads to numerical instability, thus undermining the reliability and effectiveness of the portfolio.
As a result, we decouple the learning processes.
Specifically,
$\mu-r$ and $\Sigma^{-1}$, both time-dependent, are the immediate factors for the optimal portfolio selection, while $K(0, T)$, a constant over the whole planning investment horizon, serves as a long-term factor. 
Thus, in Section \ref{se:mu_r}, we first develop an algorithm presented in Algorithm \ref{alg:mu_r}, in which the vector $\mu-r$ is learned in each dimension independently.
The inverse covariance matrix $\Sigma^{-1}$ is obtained by the shrinking estimators in \cite{Shi2020Improving}.
Then, in Section \ref{se:alg}, we focus on learning $K(0,T)$ under given the estimation of $\mu-r$ and $\Sigma^{-1}$, which is presented in Algorithm \ref{alg:K}. 
At the end of Section \ref{se:alg}, we combine the learning processes of $\mu-r$, $\Sigma^{-1}$ and $K(0,T)$, and develop the online SAC algorithm, i.e., Algorithm \ref{alg:online}, for continuous-time multi-asset MV portfolio selection.

\subsection{Learning the excess return}\label{se:mu_r}


In $\mu-r\in\mathbb{R}^{n\times 1}$, the excess return
$\mu^{(i)}-r$ is only related to the price data of the $i$-th risky asset, $i=1,\dots,n$.
Thus, in this section, we conduct independent learning process for the excess return of each risky asset.
For $\mu^{(i)}-r$,
we develop an one-dimensional algorithm to learn it, based on the policy evaluation and policy improvement in Section \ref{se:alg_convergence} with an special case of $n=1$.

In the common practice within the field of RL algorithm, the (exploratory) portfolio selection is usually parameterized with (deep) neural networks \citep{Coache2024Reinforcement, Duarte2024Machine}.
Thanks to Theorem \ref{th:v_theta_n} and Theorem \ref{th:sac}, 
at time $t$, we can parameterize the one-dimensional exploratory portfolio selection, which only consists of the riskless asset and the $i$-th risky asset, with the explicit expression:
\begin{equation}
\label{equ:para_p_i}
	p(t,\theta;\phi^{(i)})
	=\mathcal{N}\Big((\dfrac{\phi_1^{(i)}}{2\gamma}-w)
		\phi_4^{(i)}
	\phi_3^{(i)},\dfrac{\lambda}{2}\dfrac{e^{\phi_2^{(i)}\cdot(T-t)}}{\gamma}
	\phi_4^{(i)}
\Big),
\end{equation}
where $\phi^{(i)}=\{\phi_1^{(i)},\phi_2^{(i)},\phi_3^{(i)},\phi_4^{(i)}\}$.
Comparing the expression of $p(t,\theta;\phi^{(i)})$ with the optimal exploratory portfolio selection $P^*(t,\cdot)$ in Theorem \ref{th:v_theta_n}, 
we conclude that,
$\phi_1^{(i)}$, $\phi_2^{(i)}$, $\phi_3^{(i)}$ and $\phi_4^{(i)}$ 
are introduced to learn
$K^{(i)}(0,T)$, $K^{(i)}(t,T)$, $\mu^{(i)}-r$ and $(\sigma^{(i)})^2$, i.e.,
\begin{align}\label{equ:phi_ki}
	\phi_1^{(i)} = e^{K^{(i)}(0,T)\cdot T}+2\gamma w^o,\quad
	\phi_2^{(i)} = K^{(i)}(t,T),\quad
	\phi_3^{(i)} = \mu^{(i)}-r,\quad
	\phi_4^{(i)} = \frac{1}{{(\sigma^{(i)})^2}}.
\end{align}
respectively. And, $\phi_3^{(i)}$ is what we need to obtain.



As known in Section \ref{se:alg_convergence}, the learning process consists of the iterative procedures of policy evaluation and policy improvement.
We start from some initialized values for $\phi^{(i)}$ and then update them iteratively.
For the policy evaluation, given $\phi^{(i)}$, according to Lemma \ref{le:policy_evaluation}, 
the corresponding value function $v^p(t,w)$ is not only related to $p(t,\theta;\phi^{(i)})$ but also to the true value of return rate $\mu^{(i)}(t)$ and volatility $\sigma^{(i)}(t)$ which cannot be obtained directly.
In order to implement the policy evaluation, at time $t$,
according to the form of \eqref{equ:V_n},
we parameterize the corresponding value functions $v^p(t,w)$ as
\begin{equation}
\label{equ:para_v_i}
	v(t,w;\psi^{(i)})
	=-\gamma e^{-\psi_2^{(i)}\cdot(T-t)}(w-\dfrac{\psi_1^{(i)}}{2\gamma})^2+\psi_3^{(i)}+\dfrac{\lambda}{2}\psi_4^{(i)},
\end{equation}
and choose $\psi^{(i)} = \{\psi_1^{(i)},\psi_2^{(i)},\psi_3^{(i)},\psi_4^{(i)}\}$
such that $v(t,w;\psi^{(i)})$ could approximate $v^p(t,w)$ with the available data of the $i$-th risky asset prices.

It is noticed that the value function $v^p(t,w)$
satisfies the dynamic programming
\begin{equation}\label{equ:bellman}
	{\rm E}_{t}\Big(\dfrac{v^p(t+\Delta t,\widetilde{W}_{t+\Delta t})-v^p(t,\widetilde{W}_{t})}{\Delta t}\Big)+\lambda h(p(t,\theta;\phi^{(i)})) = 0
\end{equation}
By collecting $M$ samples for the time-series data of the return rate of the $i$-th risky asset
\begin{equation*}
\{R^{(i,1)},\dots,R^{(i,M)}\},
\end{equation*}
the left-hand side of the dynamic programming \eqref{equ:bellman} can be calculated numerically.
Specifically, for the $k$-th sample, we generate an allocation $\theta_{t}^{(i,k)}$ under the given exploratory portfolio selection $p(t,\theta;\phi^{(i)})$.
The discounted wealth at $t+\Delta t$ time can be simulated by
\begin{align}\label{equ:dW_i}
	W_{t+\Delta t}^{(i,k)}=W_{t}^{(i)}+R^{(i,k)}\theta_{t}^{(i,k)}.
\end{align}
Then, the left-hand side of the dynamic programming \eqref{equ:bellman}
is approximated by
\begin{align*}
	\delta_t:=
	\dfrac{1}{M}\sum_{k=1}^M \dfrac{v(t+\Delta t,W_{t+\Delta t}^{(i,k)};\psi^{(i)})-v(t,W_{t};\bar{\psi}^{(i)})}{\Delta t}
	+\lambda h(p(t,\theta;\phi^{(i)})),
\end{align*}
where $\bar{\psi}^{(i)}$ is the set of parameters in value function \eqref{equ:para_v_i} learned at last time point.
Hence, we define the loss function
\begin{align*}
	L_{t}(\psi^{(i)},\bar{\psi}^{(i)},\phi^{(i)})
	=\dfrac{\Delta t}{2}\delta_{t}^2,
\end{align*}
and update the parameterized value function by
\begin{align}\label{equ:min}
	\psi^{(i)}\leftarrow\arg\min_{\psi^{(i)}} ~L_{t}(\psi^{(i)},\bar{\psi}^{(i)},\phi^{(i)}).
\end{align}

When it comes to the policy improvement, we update the exploratory portfolio selection $p(t,\theta;\phi^{(i)})$ under given updated parameters in $\psi^{(i)}$.
At time $t$, according to Lemma \ref{le:policy_improvement}, the exploratory portfolio selection can be improved into
\begin{align}\label{equ:policy_improvement_i}
	\mathcal{N}\Big((\dfrac{\psi_1^{(i)}}{2\gamma}-w)\dfrac{\mu^{(i)}(t)-r}{(\sigma^{(i)}(t))^2},\dfrac{\lambda}{2}\dfrac{e^{\psi_2^{(i)}(T-t)}}{\gamma}\dfrac{1}{(\sigma^{(i)}(t))^2}\Big).
\end{align}
Comparing \eqref{equ:policy_improvement_i} with 
the parametric form of the exploratory portfolio selection \eqref{equ:para_p_i}, we conduct that the parameters in $p(t,\theta;\phi^{(i)})$ are updated by
\begin{equation}\label{equ:update_phi12_i}
\begin{split}
	\phi_1^{(i)}\leftarrow\psi_1^{(i)},\qquad
	\phi_2^{(i)}\leftarrow\psi_2^{(i)},\qquad
	\phi_4^{(i)}\leftarrow\dfrac{1}{(\widehat{\sigma}^{(i)}(t))^2},
\end{split}
\end{equation}
in which $\widehat{\sigma}^{(i)}(t)$ is obtained by maximum likelihood estimation (MLE) \citep{Campbell1996Econometrics}.
What's more, following \cite{Wang2020Continuous}, 
when given $\phi_1^{(i)}$, $\phi_2^{(i)}$ and $\phi_4^{(i)}$,
the parameter $\phi_3^{(i)}$ can be updated by
\begin{equation}\label{equ:update_phi3_i}
	\phi_3^{(i)}\leftarrow\arg\max_{\phi_3^{(i)}} L_{t}(\psi^{(i)},\bar{\psi}^{(i)},\phi^{(i)}).
\end{equation}
The pseudocode of iterative learning procedure for $\mu^{(i)}-r$ is summarized in Algorithm \ref{alg:mu_r}.
After learned $\mu^{(i)}-r$ for each risky asset,
the excess expected return vector is assembled by
\begin{equation}\label{equ:phi3_combine}
	\mu-r\leftarrow\begin{bmatrix} \phi_3^{(1)},\dots,\phi_3^{(n)} \end{bmatrix}^\top.
\end{equation}

\renewcommand{\algorithmicrequire}{\textbf{Input:}}
\renewcommand{\algorithmicensure}{\textbf{Output:}}
\begin{breakablealgorithm}\label{alg:mu_r}
\caption{The Learning Process of $\mu^{(i)}-r$}
\begin{algorithmic}[1]
\REQUIRE 
The initialized parameters $\phi^{(i)}$ in exploratory portfolio selection;
The initialized parameters $\psi^{(i)}$ in the value function;
The $t$ time discounted wealth $W_{t}$;
The time-series data of return rates of the $i$-th risky asset $\{R^{(i,1)},\dots,R^{(i,M)}\}$.
\ENSURE 
The learned parameter $\mu^{(i)}-r=\phi_3^{(i)}$.
\renewcommand{\algorithmicensure}{\textbf{Procedure:}}
\ENSURE  
\FOR {$k=1:M$}
	\STATE{Sample an allocation $\theta_{t}^{(i,k)}\sim p(t,W_{t},\theta;\phi^{(i)})$.}
	\STATE{Simulate the discounted wealth $W_{t+\Delta t}^{(i,k)}$ using $R^{(i,k)}$ with \eqref{equ:dW_i}.}
\ENDFOR
\STATE{Update the parameters $\psi^{(i)}$ in the value function with \eqref{equ:min}.}
\STATE{Update the parameters $\phi^{(i)}$ in the exploratory portfolio selection with \eqref{equ:update_phi12_i} and \eqref{equ:update_phi3_i}.}
\end{algorithmic}
\end{breakablealgorithm}

\subsection{Learning the average profitability of risky assets}\label{se:alg}

After obtained the estimation of excess expected return vector $\widehat{\mu}-r$ and inverse covariance matrix $\widehat{\Sigma}^{-1}$, in this section, we focus on the learning process of $K(0,T)$ to complete the multi-asset MV optimization framework.
And, a $n$-dimensional algorithm, which operates through the iterative process of policy evaluation and policy improvement presented in Section \ref{se:alg_convergence} with $n$ risky assets, is designed.

At time $t$, we parameterize the multi-asset exploratory portfolio selection with an explicit expression:
\begin{equation}\label{equ:para_p}
	p(t,\theta;\phi)
	=\mathcal{N}\Big((\dfrac{\phi_1}{2\gamma}-w)
		\widehat{\Sigma}^{-1}(\widehat{\mu}-r),
	\dfrac{\lambda}{2}\dfrac{e^{\phi_2\cdot(T-t)}}{\gamma}\widehat{\Sigma}^{-1}\Big),
\end{equation}
{\comm where $\phi=\{\phi_1,\phi_2\}$.}
Comparing the expression of $p(t,\theta;\phi)$ with the optimal multi-asset exploratory portfolio selection $P^*(t,\cdot)$ in Theorem \ref{th:v_theta_n}, 
we conclude that, $\phi_1$ is introduced to learn $K(0,T)$, and $\phi_2$ is introduced to learn $K(t,T)$
\begin{align}\label{equ:phi_k}
	\phi_1 = e^{K(0,T)\cdot T}+2\gamma w^o,\quad
	\phi_2 = K(t,T).
\end{align}

For the policy evaluation, similar to the approach in Section \ref{se:mu_r}, we approximate the value function of the multi-asset exploratory portfolio selection \eqref{equ:para_p} with
\begin{equation}\label{equ:para_v}
	v(t,w;\psi)
	=-\gamma e^{-\psi_2\cdot(T-t)}(w-\dfrac{\psi_1}{2\gamma})^2+\psi_3+\dfrac{\lambda}{2}\psi_4,
\end{equation}
where $\psi=\{\psi_1,\psi_2,\psi_3,\psi_4\}$.
As the historical data of risky asset prices can be reused,
we collect $M$ samples for time-series data of the return rate of $n$ risky assets
\begin{align*}
	\{R^1,\dots,R^M\},
\end{align*}
in which $R^k=\begin{bmatrix} R^{(1,k)} &\cdots &R^{(n,k)} \end{bmatrix}^\top\in\mathbb{R}^{n\times 1}, k=1,\dots,M$.
For the $k$-th sample, we generate an allocation $\Theta_{t}^k\in\mathbb{R}^{n\times 1}$ under the given multi-asset exploratory portfolio selection $p(t,\theta;\phi)$, and simulate the discounted wealth at $t+\Delta t$ time with
\begin{equation}\label{equ:dW}
	W_{t+\Delta t}^k=W_{t}+(R^k)^\top\Theta_{t}^{k}.
\end{equation}
By defining
\begin{equation*}
	\delta_{t}(\psi,\bar{\psi},\phi)
	=\dfrac{1}{M}\sum_{k=1}^M \dfrac{v(t+\Delta t,W_{t+\Delta t}^k;\psi)-v(t,W_{t};\bar{\psi})}{\Delta t}
	+\lambda h(p(t,\theta;\phi))
\end{equation*}
and the loss function
\begin{equation*}
	L_{t}(\psi,\bar{\psi},\phi)
	=\dfrac{\Delta t}{2}
	\delta_t^2,
\end{equation*}
in which $\bar{\psi}$ is the set of parameters in value function \eqref{equ:para_v} learned at last time point,
the parameterized value function can be updated by
\begin{align}\label{equ:multi_min}
	\psi\leftarrow\arg\min_{\psi} ~L_{t}(\psi,\bar{\psi},\phi).
\end{align}

For the policy improvement in the $n$-dimensional algorithm, under given updated parameters in $\psi$, according to Lemma \ref{le:policy_improvement}, the multi-asset exploratory portfolio selection can be improved into
\begin{align}\label{equ:policy_improvement}
	\mathcal{N}\Big((\dfrac{\psi_1}{2\gamma}-w)
		\widehat{\Sigma}^{-1}(\widehat{\mu}-r),
	\dfrac{\lambda}{2}\dfrac{e^{\psi_2(T-t)}}{\gamma}\widehat{\Sigma}^{-1}\Big).
\end{align}
Comparing \eqref{equ:policy_improvement} with 
the parametric form in \eqref{equ:para_p}, we conduct that the parameters in multi-asset exploratory portfolio selection $p(t,\theta;\phi)$ are updated by
\begin{align}\label{equ:update_phi}
	\phi_1\leftarrow\psi_1,
	\qquad
	\phi_2\leftarrow\psi_2.
\end{align}
Thus, at time $t$, the pseudocode of iterative learning procedure for $K(0,T)$ can be summarized in Algorithm \ref{alg:K}.
\renewcommand{\algorithmicrequire}{\textbf{Input:}}
\renewcommand{\algorithmicensure}{\textbf{Output:}}
\begin{breakablealgorithm}\label{alg:K}
\caption{The Learning Process of $K(0,T)$}
\begin{algorithmic}[1]
\REQUIRE 
The values of $\widehat{\mu}-r$ and $\widehat{\Sigma}^{-1}$;
The initialized parameters $\phi_1,\phi_2$;
The initialized parameters $\psi_1,\psi_2,\psi_3,\psi_4$;
The $t$ time discounted wealth $W_{t}$;
The time-series data of returns rates $\{R^1,\cdots,R^M\}$.
\ENSURE 
The learned parameter $K(0,T) = \frac{1}{T}\ln( \phi_1-2\gamma w^o)$.
\renewcommand{\algorithmicensure}{\textbf{Procedure:}}
\ENSURE  
\FOR {$k=1:M$}
	\STATE{Sample an allocation $\Theta_{t}^k\sim p(t,W_{t},\theta;\phi)$.}
	\STATE{Simulate the next time discounted wealth $W_{t+\Delta t}^k$ using $R^k$ with \eqref{equ:dW}.}
\ENDFOR
\STATE{Update the parameters $\psi$ in the value function with \eqref{equ:multi_min}.}
\STATE{Update the parameters $\phi$ in the exploratory portfolio selection with \eqref{equ:update_phi}.}
\end{algorithmic}
\end{breakablealgorithm}

Finally, we can develop the online SAC algorithm, Algorithm \ref{alg:online}, for learning the continuous-time multi-asset MV portfolio selection in a discrete-time setting. 
We divide the investment horizon $[0,T]$ into $N$ time intervals $[t_j,t_{j+1})$, $j = 0, 1, \dots, N-1$, where $t_0 = 0$ and $t_N = T$.
At each time point, the portfolio selection is implemented with the currently learned parameters, and the wealth at the next time point is obtained.
We reiterate that the exploratory portfolio selections are used for learning, and the mean of the learned multi-asset exploratory portfolio selection is used when implementing.

In Algorithm \ref{alg:online}, parameters are updated every $m$ time points.
When performing the updates,
we use the values obtained from the previous update as the initial values for the current update.
For the $i$-th risky asset, Algorithm \ref{alg:mu_r} is called to learn $\mu^{(i)}-r$.
Subsequently, $\widehat{\mu}-r$ is obtained by \eqref{equ:phi3_combine},
and $\widehat{\Sigma}^{-1}$ is obtained by the shrinking technique in \cite{Shi2020Improving}. 
Thereafter, $\widehat{\mu}-r$ and $\widehat{\Sigma}^{-1}$ are then used as inputs for Algorithm \ref{alg:K},
and all the parameters in optimal multi-asset MV portfolio selection \eqref{equ:theta_n_cl} can be learned.

\renewcommand{\algorithmicrequire}{\textbf{Input:}}
\renewcommand{\algorithmicensure}{\textbf{Output:}}
\begin{breakablealgorithm}\label{alg:online}
\caption{The Optimal multi-asset MV Portfolio Selection with Online SAC Algorithm}
\begin{algorithmic}[1]
\REQUIRE 
Investment horizon $T$;
Time intervals $[t_j,t_{j+1})$, $j = 0, 1, \dots, N-1$;
Initial Wealth $w^{o}$. 
\ENSURE 
The optimal multi-asset MV portfolio selection process $\{\Theta_{t_j}^*\}_{j=0}^{N-1}$;
The corresponding wealth process $\{W_{t_j}\}_{j=0}^N$.
\renewcommand{\algorithmicensure}{\textbf{Procedure:}}
\ENSURE 
\STATE{Set the learning cycle $m$.}
\FOR {$j=0:(N-1)$}
\IF {$j \equiv 0 \pmod{m}$}
\FOR {$i=1:n$}
	\STATE{Update $\phi_3^{(i)}$ by Algorithm \ref{alg:mu_r}.}
\ENDFOR
	\STATE{Set $\widehat{\mu}-r\leftarrow\begin{bmatrix} \phi_3^{(1)},\dots,\phi_3^{(n)} \end{bmatrix}^\top$.}
	\STATE{Estimate $\widehat{\Sigma}^{-1}$ by the shrinking technique in \cite{Shi2020Improving}.}
	\STATE{Update $\phi_1$ by Algorithm \ref{alg:K} with $\widehat{\mu}-r$ and $\widehat{\Sigma}^{-1}$.}
\ENDIF
\STATE{Implement the optimal multi-asset MV portfolio selection at $t_j$ time by
\vspace{-.5em}
\begin{equation*}
	\Theta_{t_j}^*=(\dfrac{\phi_1}{2\gamma}-W_{t_j})\widehat{\Sigma}^{-1}(\widehat{\mu}-r).
\end{equation*}
\vspace{-1em}}
\STATE{Observe the discounted wealth $W_{t_{j+1}}$ at time $t_{j+1}$ from the financial market.}
\ENDFOR
\end{algorithmic}
\end{breakablealgorithm}

\section{Numerical Study}

In this section, we conduct numerical experiments under various simulated and real financial markets to demonstrate the superiority of our SAC algorithm.
The risk aversion coefficient is taken as $\gamma=1.5$ \citep{Kydland1982Time}.
The exploration weight $\lambda$ is exogenous and pre-specified by the SAC agent.
Here, we set $\lambda=1$ and refer the interested readers to \cite{Dai2023Learning} for a detailed description of the value of $\lambda$.

\subsection{The stationary market case}\label{se:stationary}

A key advantage of a simulation study is that we have the ground truth (``omniscient'') values to compare against the learning results. 
In the stationary market case, we investigate the convergence of 
the estimation of $\mu-r$ and $K(0,T)$ given by Algorithm \ref{alg:mu_r} and Algorithm \ref{alg:K}.

The sample paths of risky assets prices are generated from geometric brownian motion \eqref{equ:S} with
\begin{align*}
\mu-r=
\begin{bmatrix}
	0.06 &0.08
\end{bmatrix}^\top,
\qquad
\sigma^{(1)}=0.1,
\qquad
\sigma^{(2)}=0.15,
\qquad
\rho^{(12)}=\rho^{(21)}=0.1,
\end{align*}
which are usually considered as “typical” stocks for simulation \citep{Hutchinson1994nonparametric}.
We generate a training dataset with daily data for 2,500 months.
{\comm First, the parameters $\widehat{\mu}-r$ and $\widehat{\Sigma}^{-1}$ are obtained by Maximum Likelihood Estimation (MLE) according to the whole training dataset, while the parameters in $\phi^{(i)}$ and $\phi$ are initialized by \eqref{equ:phi_ki} and \eqref{equ:phi_k} with $\widehat{\mu}-r$, $\widehat{\Sigma}^{-1}$.}
Then, at each learning episode, we randomly sample a consecutive one-month subsequence from the training dataset, and $\mu-r$ and $K(0,T)$ are learned by Algorithm \ref{alg:mu_r} and Algorithm \ref{alg:K}, respectively.

Figure \ref{fig:convergence} illustrates the convergence of the relative errors for $\mu^{(1)}-r$, $\mu^{(2)}-r$ and $K(0,T)$. 
In fact, Algorithm \ref{alg:mu_r} and Algorithm \ref{alg:K}, initialized by MLE, demonstrates significant improvements in parameter estimation accuracy. 
Specifically, after 3,000 learning episodes, the relative errors for \(\mu^{(1)} - r\) and \(\mu^{(2)} - r\) are reduced to around 1\% and 3\%, respectively, and to around 4\% for $K(0,T)$.
Notably, during the learning episodes, the relative errors of all parameters show a steady decreasing trend. 
This stable convergence pattern emphasizes the effectiveness of the proposed SAC algorithm and has the potential to improve the out-of-sample performance for the multi-asset MV portfolio selection.

\begin{figure}[H]
\centering
\subfigure[]{
	\includegraphics [width=0.48\textwidth] {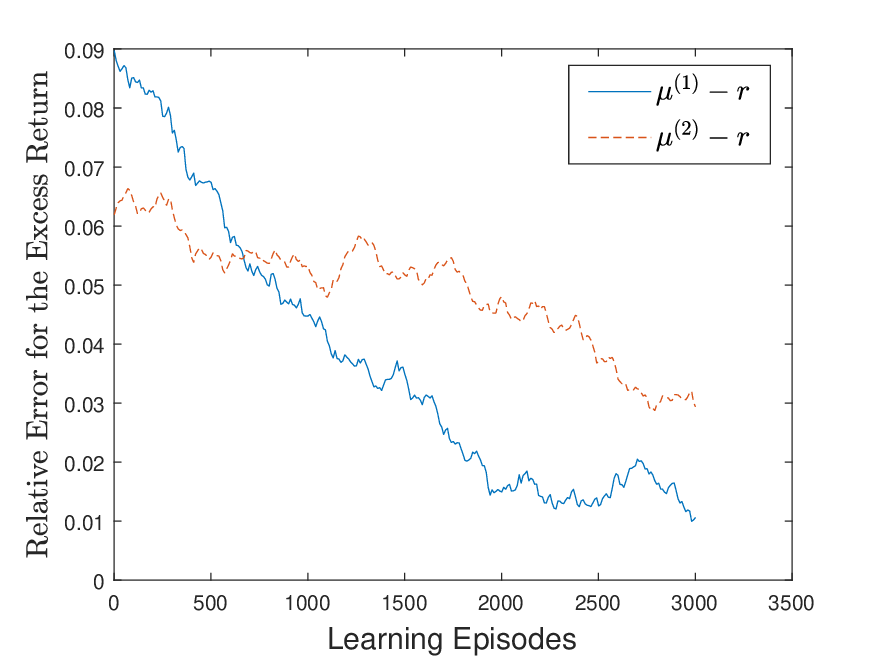}
}
\subfigure[]{
	\includegraphics [width=0.48\textwidth] {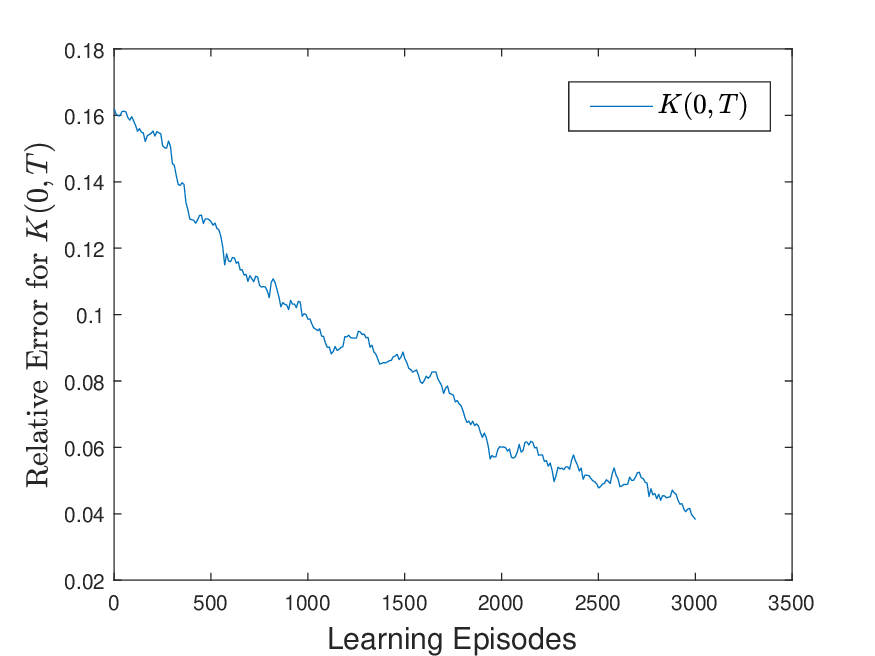}
}
\caption{The relative errors.}\label{fig:convergence}
\end{figure}

Next, we show the robustness of the convergence of Algorithm \ref{alg:mu_r} and Algorithm \ref{alg:K}.
Since the correlation coefficients between risky assets are crucial factors differentiating multi-asset financial markets from single-asset ones,
we carry out experiments with various correlation coefficients $\rho^{(12)}$ (or $\rho^{(21)}$) between the two risky assets.
In Figure \ref{fig:rho}, we report the relative errors of $\mu^{(1)}-r$, $\mu^{(2)}-r$ and $K(0,T)$ as the learning episodes increases.
It is shown that, in all the simulated financial markets, the relative errors of $\mu^{(1)}-r$, $\mu^{(2)}-r$ and $K(0,T)$ decrease in a consistent and stable manner.
This convergence pattern indicates the reliability and adaptability of Algorithm \ref{alg:mu_r} and Algorithm \ref{alg:K} in different market conditions.
\begin{figure}[H]
\centering
\subfigure[$\rho^{(12)}=\rho^{(21)}=0$]
{
	\begin{minipage}[t]{0.31\linewidth}
		\centering
		\includegraphics [width=0.98\textwidth] {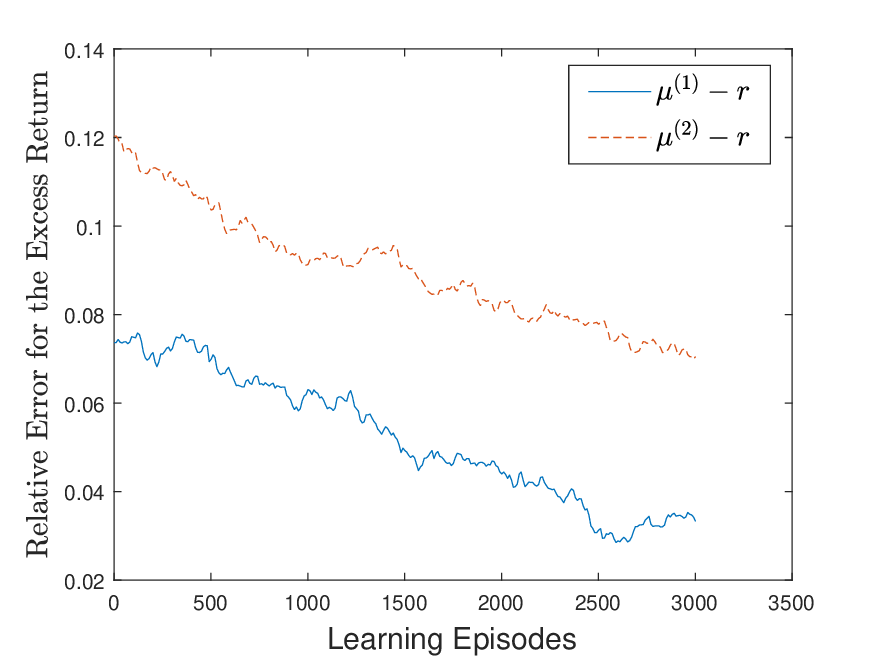}\\
		\vspace{0.02cm}
		\includegraphics [width=0.98\textwidth] {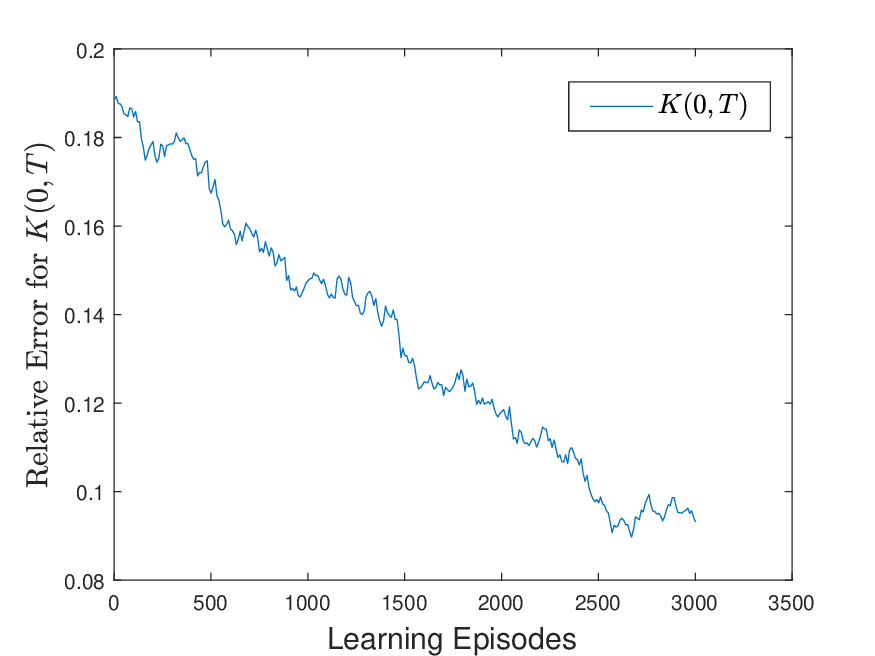}\\
		\vspace{0.02cm}
	\end{minipage}
}
\subfigure[$\rho^{(12)}=\rho^{(21)}=0.05$]
{
	\begin{minipage}[t]{0.31\linewidth}
		\centering
		\includegraphics [width=0.98\textwidth] {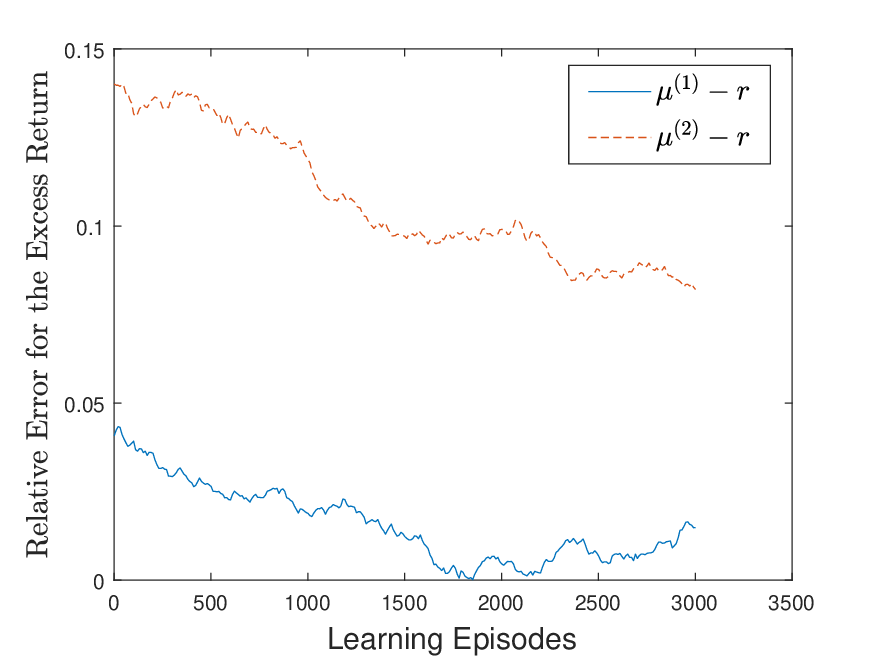}\\
		\vspace{0.02cm}
		\includegraphics [width=0.98\textwidth] {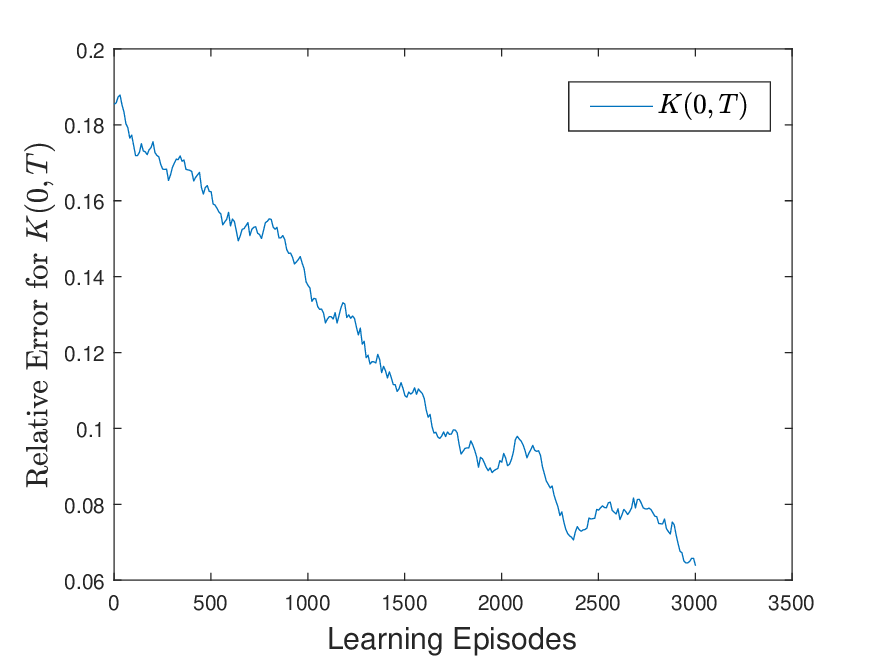}\\
		\vspace{0.02cm}
	\end{minipage}
}
\subfigure[$\rho^{(12)}=\rho^{(21)}=0.15$]
{
	\begin{minipage}[t]{0.31\linewidth}
		\centering
		\includegraphics [width=0.98\textwidth] {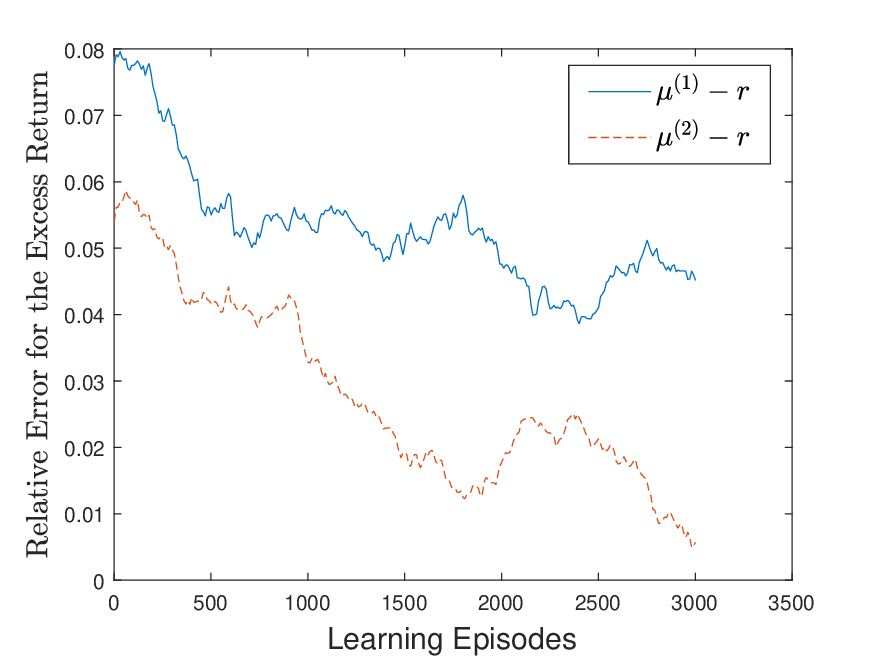}\\
		\vspace{0.02cm}
		\includegraphics [width=0.98\textwidth] {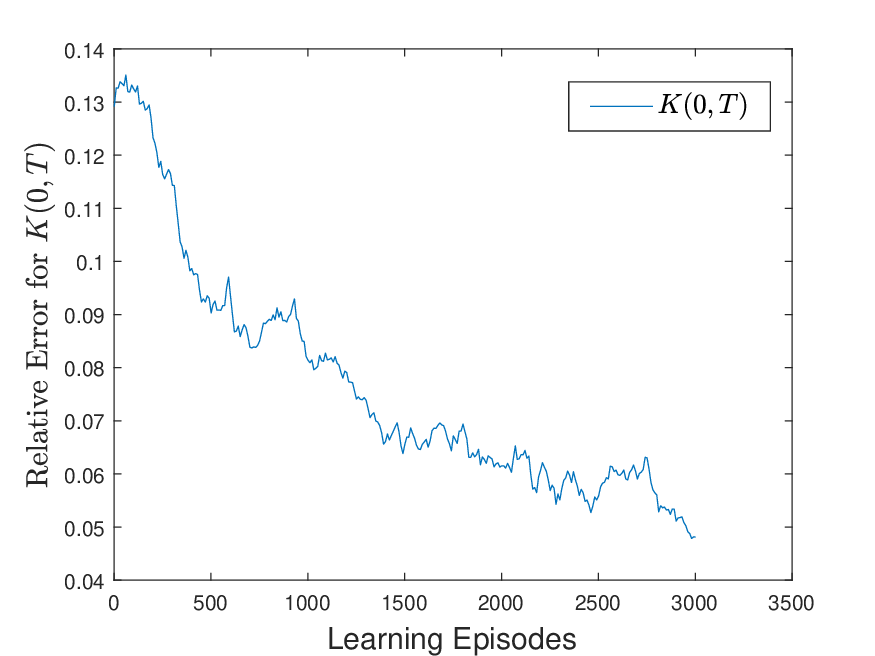}\\
		\vspace{0.02cm}
	\end{minipage}
}
\caption{The relative errors under various simulated financial markets}\label{fig:rho}
\end{figure}

Finally, we show the stability of Algorithm \ref{alg:K}.
According to Theorem \ref{th:K_k}, $K(0,T)$ can also be derived using a ``Combination'' method, 
in which $K^{(i)}(0,T)$ is learned by Algorithm \ref{alg:mu_r} and $K(0,T)$ is combined through \eqref{equ:K_k}.
In contrast, Algorithm \ref{alg:K} learns $K(0,T)$ as a whole in the multi-asset financial market.
In Figure \ref{fig:K}, we compare the performance of these two methods.
In subfigure (a), 
the relative error of $K(0, T)$ obtained by Algorithm \ref{alg:K} continuously and steadily decreases as the number of learning episodes increases.
Conversely, the relative error of $K(0,T)$ obtained by ``Combination'' method is neither stable nor convergent.

The relationship between the relative error of $K(0,T)$ and the relative error of $\Sigma^{-1}$ for both methods is depicted in subfigure (b).
Let's define the relative error of $\Sigma^{-1}$ as $\frac{\|\hat{\Sigma}^{-1}-\Sigma^{-1}\|}{\|\Sigma^{-1}\|}$, where $\hat{\Sigma}^{-1}$ is the estimated value of $\Sigma^{-1}$ and $\|\cdot\|$ represents the 2-norm of a matrix.
In subfigure (b), it can be observed that, for Algorithm \ref{alg:K}, there exists a relatively weak correlation between the relative error of $K(0,T)$ and the relative error of $\Sigma^{-1}$. 
Specifically, despite significant fluctuations in the relative error of $\Sigma^{-1}$ within a given range, the relative error of $K(0, T)$ maintains remarkable stability, further demonstrate the potential of Algorithm \ref{alg:K} in real-world applications.

\begin{figure}[H]
\centering
\subfigure[]{
	\includegraphics [width=0.48\textwidth] {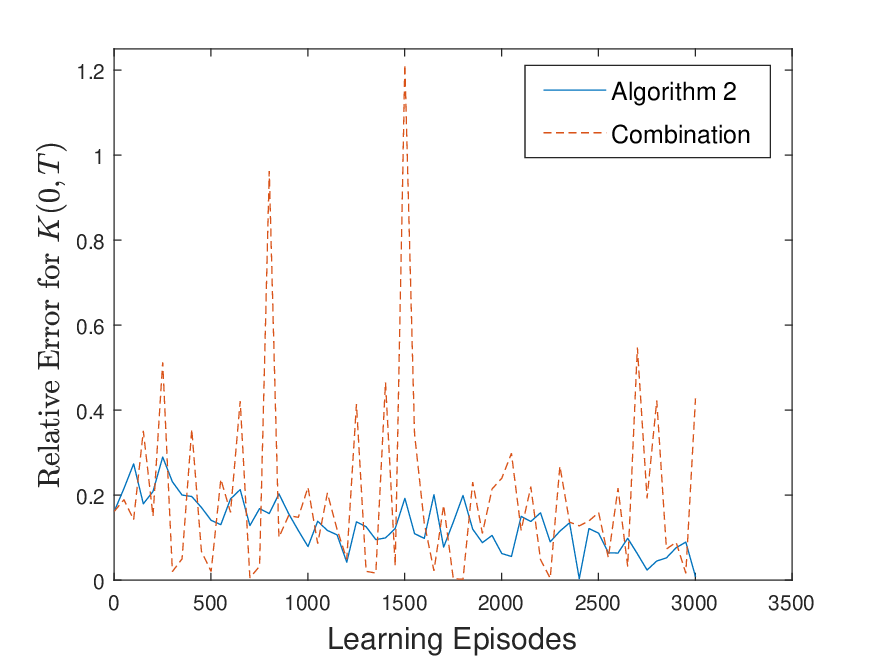}
}
\subfigure[]{
	\includegraphics [width=0.48\textwidth] {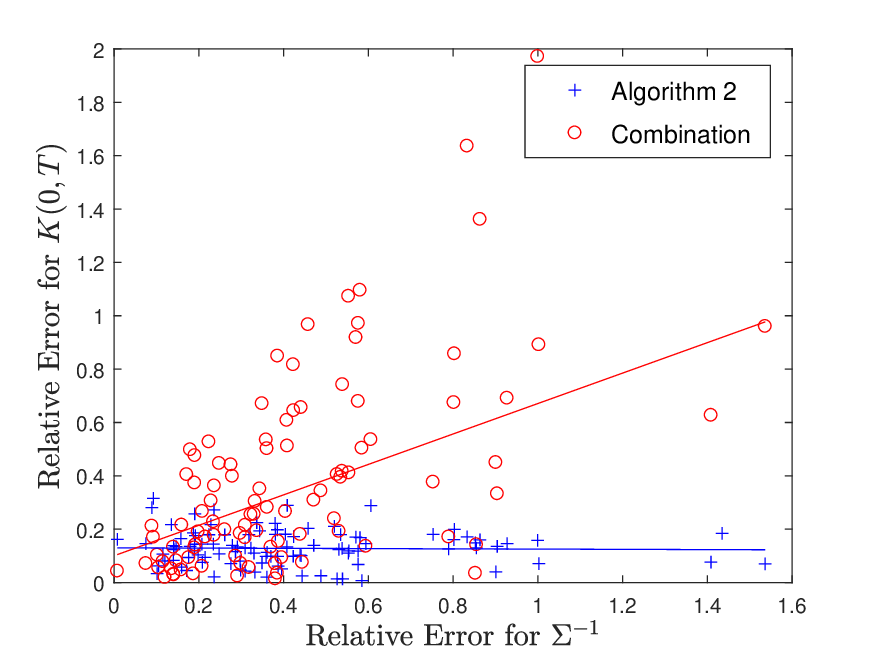}
}
\caption{The estimation of $K(0,T)$ 
using Algorithm \ref{alg:K} and the ``Combination'' method.}\label{fig:K}
\end{figure}

\subsection{The real financial market case}

In the real financial market case, we study the dynamic allocation among a riskless asset and multiple risky assets.
We consider the risk-free interest rate $r=0.02$ and the initial wealth $w^o = 1$.
The planning investment horizon is set to be $T=\frac{21}{252}$ year (one month), 
and rebalancing of multi-asset MV portfolio selection takes place every day ($N = 21$) with transaction cost $c=3‰$ \citep{Balduzzi1999Transaction}.
In Algorithm \ref{alg:online}, the parameters $\psi$ and $\phi$ are updated every $m=\frac{5}{252}$ year (one week). 
The learned values are then used throughout the next week.
We allow leverage and borrowing, and truncate the proportion $\frac{\sum_{i=1}^n|\theta_t^{(i)}|}{W_t}$ to be in the interval $[-1, 2]$, $\forall t\in[0,T]$.

We compare the portfolio selection based on the Algorithm \ref{alg:online}, denoted by ``SAC'', with the broad-market index as well as two other portfolio selections:
\begin{description}[leftmargin = 3.5em, labelwidth = 1em]
\item[Plug-in]
This portfolio selection is obtained by traditional paradigm.
It follows a rolling time window to form the MLE for the model parameters, and then substitute the resulting MLE into the analytical solutions \eqref{equ:theta_n_cl} for the portfolios. 
\item[B-H\quad\,]
The naive ``buy-and-hold'' portfolio selection, which equally invests wealth into $n$ risky assets at each rebalancing date.
This portfolio selection does not involve any estimation or optimization.
\end{description}
The above four portfolio selections are computed across different real financial markets, which are widely used, as listed in Table \ref{table:data}. 
For each real financial market, we take the data of risky asset prices from 2000-01-01 to 2024-12-30, 300 months in total, and use the first 144 months (12 years) for training and leave the rest 156 months (13 years) for testing.
\setlength\LTcapwidth{\textwidth}
\begin{longtable}{p{2.5cm}|p{7cm}|p{1.25cm}|p{2.5cm}}
\caption{Data Description}\label{table:data}\\ 
\toprule
Abbreviation  &Description &$n$ &The broad-market index
  \\\hline
  29DJI
  &The components of DJI which are listed before 2000-01-01
  &29
  &DJI
  \\\hline        
  57NASDAQ
  &The components of NASDAQ100 which are listed before 2000-01-01
  &57
  &NASDAQ100
  \\\hline        
  340SP
  &The components of S\&P500 which are listed before 2000-01-01
  &340
  &S\&P500
  \\
\bottomrule
\end{longtable}

\noindent
The testing performance of portfolio selections is assessed based on the
following criteria:
\begin{itemize}
\item
monthly Mean of investment return rate (MEAN)
\item
monthly Standard Deviation of investment return rate (STD)
\item
annualized Certainty-Equivalent Return (CEQ) \citep{DeMiguel2007Optimal}

\item
annualized Sharpe Ratio (SR) \citep{Sharpe1994The}

\item
daily Turnover Rate (TR) \citep{Kirby2012Its}

\item
annualized Certainty-Equivalent Return adjusted 
under the Transaction Costs 

\item
annualized Sharpe Ratio adjusted 
under the Transaction Costs 
\end{itemize}
Among these criteria, MEAN measures the investment return of the portfolio, while STD measures the investment risk.
CEQ represents the guaranteed return an investor would accept rather than adopting the portfolio, theoretically linked to mean-variance utility \eqref{model_cl} under unit initial wealth ($W_0=w^o=1$). 
SR normalizes excess returns by volatility, providing a risk-adjusted performance for the portfolio.  
TR reflects portfolio stability, with lower values indicating reduced transaction costs.  
CEQ\_TR and SR\_TR extend the measures of CEQ and SR by explicitly incorporating transaction costs to align with real-world implementation.

The results are reported in Table \ref{table:DJI}-\ref{table:SP500_340}.
In these tables, it is evident that the ``SAC'' portfolio selection always yields the highest average investment return rate, significantly outperforming the other three portfolios.
Additionally, it also attains the highest annualized CEQ and SR, followed by the ``B-H'' portfolio selection and the corresponding broad-market index.
In contrast, the ``Plug-in'' approach performs the worst in various criteria, not only in average terminal wealth but also in annualized CEQ and SR.
When considering the transaction costs, the superiority of the ``SAC'' portfolio selection is clear.
It consistently outperforms all the other portfolio selections by large margins in annualized CEQ\_TR and SR\_TR.

\setlength\LTcapwidth{\textwidth}
\begin{longtable}{p{1.75cm}|p{2.25cm}|p{2.25cm}|p{2.25cm}|p{2.25cm}}
\caption{Comparison of different portfolio selections in the real financial market of 29DJI}\label{table:DJI}\\ 
\toprule
&\thead{SAC} &\thead{Plug-in} &\thead{B-H} &\thead{DJI}
  \\\midrule
  \thead{MEAN}
  &\thead{$0.0318$}
  &\thead{$0.0067$\\$(p=0.0010)$}
  &\thead{$0.0135$\\$(p=0.0069)$}
  &\thead{$0.0086$\\$(p=0.0010)$}
  \\\hline        
  \thead{STD}
  &\thead{$0.0756$}
  &\thead{$0.0560$}
  &\thead{$0.0356$}
  &\thead{$0.0425$}
  \\\midrule
  \thead{CEQ}
  &\thead{$0.2789$}
  &\thead{$0.0242$}
  &\thead{$0.1394$}
  &\thead{$0.0709$}
  \\\hline
  \thead{SR}
  &\thead{$1.3810$}
  &\thead{$0.3131$}
  &\thead{$1.1535$}
  &\thead{$0.5665$}
  \\\midrule
  \thead{TR}
  &\thead{$0.0939$}
  &\thead{$0.2478$}
  &\thead{$0.0076$}
  &\thead{$0.0000$}
  \\\hline
  \thead{CEQ\_TR}
  &\thead{$0.2072$}
  &\thead{$-0.1942$}
  &\thead{$0.1336$}
  &\thead{$0.0709$}
  \\\hline
  \thead{SR\_TR}
  &\thead{$1.1097$}
  &\thead{$-0.8119$}
  &\thead{$1.1062$}
  &\thead{$0.5665$}
  \\
\bottomrule
\end{longtable}

\setlength\LTcapwidth{\textwidth}
\begin{longtable}{p{1.75cm}|p{2.25cm}|p{2.25cm}|p{2.25cm}|p{2.25cm}}
\caption{Comparison of different portfolio selections in the real financial market of 57NASDAQ}\label{table:NASDAQ}\\ 
\toprule
&\thead{SAC} &\thead{Plug-in} &\thead{B-H} &\thead{NASDAQ100}
  \\\midrule
  \thead{MEAN}
  &\thead{$0.0334$}
  &\thead{$-0.0027$\\$(p<0.0001)$}
  &\thead{$0.0154$\\$(p=0.0260)$}
  &\thead{$0.0127$\\$(p=0.0123)$}
  \\\hline        
  \thead{STD}
  &\thead{$0.0860$}
  &\thead{$0.0569$}
  &\thead{$0.0465$}
  &\thead{$0.0503$}
  \\\midrule
  \thead{CEQ}
  &\thead{$0.2678$}
  &\thead{$-0.0916$}
  &\thead{$0.1461$}
  &\thead{$0.1078$}
  \\\hline
  \thead{SR}
  &\thead{$1.2787$}
  &\thead{$-0.2697$}
  &\thead{$1.0242$}
  &\thead{$0.7652$}
  \\\midrule
  \thead{TR}
  &\thead{$0.0918$}
  &\thead{$0.2562$}
  &\thead{$0.0092$}
  &\thead{$0.0000$}
  \\\hline
  \thead{CEQ\_TR}
  &\thead{$0.1974$}
  &\thead{$-0.3609$}
  &\thead{$0.1391$}
  &\thead{$0.1078$}
  \\\hline
  \thead{SR\_TR}
  &\thead{$1.0433$}
  &\thead{$-1.4125$}
  &\thead{$0.9811$}
  &\thead{$0.7652$}
  \\
\bottomrule
\end{longtable}

\setlength\LTcapwidth{\textwidth}
\begin{longtable}{p{1.75cm}|p{2.25cm}|p{2.25cm}|p{2.25cm}|p{2.25cm}}
\caption{Comparison of different portfolio selections in the real financial market of 340SP}\label{table:SP500_340}\\ 
\toprule
&\thead{SAC} &\thead{Plug-in} &\thead{B-H} &\thead{S\&P500}
  \\\midrule
  \thead{MEAN}
  &\thead{$0.0440$}
  &\thead{$0.0112$\\$(p=0.0009)$}
  &\thead{$0.0129$\\$(p=0.0005)$}
  &\thead{$0.0110$\\$(p=0.0002)$}
  \\\hline        
  \thead{STD}
  &\thead{$0.1033$}
  &\thead{$0.0654$}
  &\thead{$0.0376$}
  &\thead{$0.0355$}
  \\\midrule
  \thead{CEQ}
  &\thead{$0.3451$}
  &\thead{$0.0574$}
  &\thead{$0.1303$}
  &\thead{$0.1094$}
  \\\hline
  \thead{SR}
  &\thead{$1.4425$}
  &\thead{$0.5054$}
  &\thead{$1.0426$}
  &\thead{$0.9121$}
  \\\midrule
  \thead{TR}
  &\thead{$0.1112$}
  &\thead{$0.2748$}
  &\thead{$0.0088$}
  &\thead{$0.0000$}
  \\\hline
  \thead{CEQ\_TR}
  &\thead{$0.2605$}
  &\thead{$-0.1742$}
  &\thead{$0.1236$}
  &\thead{$0.1094$}
  \\\hline
  \thead{SR\_TR}
  &\thead{$1.2091$}
  &\thead{$-0.4784$}
  &\thead{$0.9912$}
  &\thead{$0.9121$}
  \\
\bottomrule
\end{longtable}

\section{Conclusion}

The traditional paradigm for the mean–variance (MV) analysis often predicts model parameters first and then optimizes portfolios.
The performance of the traditional paradigm is poor, especially when the scale of portfolio selection is large.
Following \cite{Wang2020Continuous}, in this paper, we design an online soft actor-critic (SAC) algorithm for the portfolio in multi-asset time-varying financial markets, which can improve the out-of-sample performance of it.
In order to further improve the learning accuracy and increase the stability of the multi-asset SAC algorithm, we separate the model parameters and learn them with decoupled processes.
Numerical studies in the simulated and real financial markets show the superiority of the portfolio using our SAC algorithm.

Possible directions for future work include an combination of the SAC algorithm and Deep Neural Network (DNN), which allows portfolio selection problems without analytic expressions to be dealt with. 
In particular, Tensor Neural Network (TNN) proposed by \cite{Wang2024Computing} can be considered, which demonstrates advantages in handling high-dimensional problems due to its unique architecture.
In this way, a wider range of financial problems, such as those involving nonlinear utility functions and diverse investment constraints, can be effectively addressed.
These questions are left for further investigations.

\vspace{+1cm}
\noindent{\bf\Large{Acknowledgments}}
\vspace{+.5cm}

This project was supported in part by 
the National Basic Research Program (12271395), 
the Humanities and Social Science Research Program of the Ministry of Education of China (22YJAZH156), 
the Innovation Team Project for Ordinary University in Guangdong Province, China (2023WCXTD022),
the Excellent Young Teacher Supporting Program of Tianjin University of Finance and Economics, China,
and National Natural Science Foundation of China (12301610).

\vspace{+1cm}
\noindent{\bf\Large{Declaration of No Competing Interests}}
\vspace{+.5cm}

The authors declare that they have no known competing financial interests or personal relationships that could have appeared to influence the work reported in this paper.

\bibliography{MV_tc-15}

\begin{thebibliography}{}

\bibitem[Aquino et~al., 2023]{Aquino2023Portfolio}
Aquino, L. D.~G., Sornette, D., and Strub, M.~S. (2023).
\newblock Portfolio selection with exploration of new investment assets.
\newblock {\em European Journal of Operational Research}, 310(2):773--792.

\bibitem[Balduzzi and Lynch, 1999]{Balduzzi1999Transaction}
Balduzzi, P. and Lynch, A.~W. (1999).
\newblock Transaction costs and predictability: Some utility cost calculations.
\newblock {\em Journal of Financial Economics}, 52(1):47--78.

\bibitem[Barroso and Saxena, 2022]{Barroso2022Lest}
Barroso, P. and Saxena, K. (2022).
\newblock Lest we forget: Learn from out-of-sample forecast errors when
  optimizing portfolios.
\newblock {\em The Review of Financial Studies}, 35(3):1222--1278.

\bibitem[Bellman, 1957]{Bellman1957Dynamic}
Bellman, R.~E. (1957).
\newblock {\em Dynamic Programming}.
\newblock Princeton University Press.

\bibitem[Bender and Thuan, 2023]{Bender2023Entropy}
Bender, C. and Thuan, N.~T. (2023).
\newblock Entropy-regularized mean-variance portfolio optimization with jumps.
\newblock https://doi.org/10.48550/arXiv.2312.13409.

\bibitem[Best and Grauer, 1991]{Best1991sensitivity}
Best, M.~J. and Grauer, R.~R. (1991).
\newblock On the sensitivity of mean-variance-efficient portfolios to changes
  in asset means: some analytical and computational results.
\newblock {\em The Review of Financial Studies}, 4(2):315–342.

\bibitem[Bj{\"o}rk et~al., 2014]{Bjoerk2014Mean}
Bj{\"o}rk, T., Murgoci, A., and Zhou, X.~Y. (2014).
\newblock Mean-variance portfolio optimization with state-dependent risk
  aversion.
\newblock {\em Mathematical Finance}, 24(1):1--24.

\bibitem[Broadie, 1993]{Broadie1993Computing}
Broadie, M. (1993).
\newblock Computing efficient frontiers using estimated parameters.
\newblock {\em Annals of Operations Research}, 45:21--58.

\bibitem[Campbell et~al., 1996]{Campbell1996Econometrics}
Campbell, J.~Y., Lo, A.~W., and MacKinlay, A.~C. (1996).
\newblock {\em The Econometrics of Financial Markets}.
\newblock Princeton University Press.

\bibitem[Candelon et~al., 2012]{Candelon2012Sampling}
Candelon, B., Hurlin, C., and Tokpavi, S. (2012).
\newblock Sampling error and double shrinkage estimation of minimum variance
  portfolios.
\newblock {\em Journal of Empirical Finance}, 19:511--527.

\bibitem[Coache and Jaimungal, 2024]{Coache2024Reinforcement}
Coache, A. and Jaimungal, S. (2024).
\newblock Reinforcement learning with dynamic convex risk measures.
\newblock {\em Mathematical Finance}, 34(2):557--587.

\bibitem[Cover and Thomas, 1991]{Cover1991Elements}
Cover, T. and Thomas, J. (1991).
\newblock {\em Elements of Information Theory}.
\newblock Wiley.

\bibitem[Czichowsky, 2013]{Czichowsky2013Time}
Czichowsky, C. (2013).
\newblock Time-consistent mean-variance portfolio selection in discrete and
  continuous time.
\newblock {\em Finance and Stochastics}, 17(2):227--271.
\newblock
  doi:\href{https://doi.org/10.1007/s00780-012-0189-9}{10.1007/s00780-012-0189-9}.

\bibitem[Dai et~al., 2023a]{Dai2023Learninga}
Dai, M., Dong, Y., and Jia, Y. (2023a).
\newblock Learning equilibrium mean-variance strategy.
\newblock {\em Mathematical Finance}, 33:1166–1212.

\bibitem[Dai et~al., 2023b]{Dai2023Learning}
Dai, M., Dong, Y., Jia, Y., and Zhou, X.~Y. (2023b).
\newblock Learning {M}erton's strategies in an incomplete market: Recursive
  entropy regularization and biased {G}aussian exploration.
\newblock https://doi.org/10.48550/arXiv.2312.11797.

\bibitem[DeMiguel et~al., 2007]{DeMiguel2007Optimal}
DeMiguel, V., Garlappi, L., and Uppal, R. (2007).
\newblock Optimal versus naive diversification: How inefficient is the 1/{N}
  portfolio strategy?
\newblock {\em The Review of Financial Studies}, 22(5):1915--1953.

\bibitem[Duarte et~al., 2024]{Duarte2024Machine}
Duarte, V., Duarte, D., and Silva, D.~H. (2024).
\newblock Machine learning for continuous-time finance.
\newblock {\em The Review of Financial Studies}, 37(11):3217--3271.

\bibitem[Fischer, 2018]{Fischer2018Reinforcement}
Fischer, T.~G. (2018).
\newblock Reinforcement learning in financial markets - a survey.
\newblock {\em FAU Discussion Papers in Economics}, 12(1):1--46.

\bibitem[Forsyth, 2020]{Forsyth2020Multiperiod}
Forsyth, P.~A. (2020).
\newblock Multiperiod mean conditional value at risk asset allocation: Is it
  advantageous to be time consistent?
\newblock {\em SIAM Journal on Financial Mathematics}, 11(2):358--384.

\bibitem[Guo et~al., 2022]{Guo2022Entropy}
Guo, X., Xu, R., and Zariphopoulou, T. (2022).
\newblock Entropy regularization for mean field games with learning.
\newblock {\em Mathematics of Operations research}, 47(4):3239--3260.

\bibitem[Haarnoja et~al., 2018a]{Haarnoja2018Softa}
Haarnoja, T., Zhou, A., Abbeel, P., and Levine, S. (2018a).
\newblock Soft actor-critic: Off-policy maximum entropy deep reinforcement
  learning with a stochastic actor.
\newblock In {\em International Conference on Machine Learning}, pages
  1861--1870, Stockholmsm{\"a}ssan, Stockholm SWEDEN.

\bibitem[Haarnoja et~al., 2018b]{Haarnoja2018Softb}
Haarnoja, T., Zhou, A., Hartikainen, K., Tucker, G., Ha, S., Tan, J., Kumar,
  V., Zhu, H., Gupta, A., and Abbeel, P. (2018b).
\newblock Soft actor-critic algorithms and applications.
\newblock https://doi.org/10.48550/arXiv.1812.05905.

\bibitem[Hiraki and Sun, 2022]{Hiraki2022toolkit}
Hiraki, K. and Sun, C. (2022).
\newblock A toolkit for exploiting contemporaneous stock correlations.
\newblock {\em Journal of Empirical Finance}, 65:99--124.

\bibitem[Hutchinson et~al., 1994]{Hutchinson1994nonparametric}
Hutchinson, J.~M., Lo, A.~W., and Poggio, T. (1994).
\newblock A nonparametric approach to pricing and hedging derivative securities
  via learning networks.
\newblock {\em The Journal of Finance}, 49(3):851--889.

\bibitem[Jiang et~al., 2022]{Jiang2022Reinforcement}
Jiang, R., Saunders, D., and Weng, C. (2022).
\newblock The reinforcement learning kelly strategy.
\newblock {\em Quantitative Finance}, 22(8):1445--1464.

\bibitem[Jobson and Korkie, 1981]{Jobson1981Putting}
Jobson, J. and Korkie, B. (1981).
\newblock Putting markowitz theory to work.
\newblock {\em Journal of Portfolio Management}, 7:70--74.

\bibitem[Jorion, 1986]{Jorion1986Bayes}
Jorion, P. (1986).
\newblock Bayes-stein estimation for portfolio analysis.
\newblock {\em The Journal of Financial and Quantitative Analysis,},
  21(3):279--292.

\bibitem[Kirby and Ostdiek, 2012]{Kirby2012Its}
Kirby, C. and Ostdiek, B. (2012).
\newblock It's all in the timing: Simple active portfolio strategies that
  outperform na{\"i}ve diversification.
\newblock {\em Journal of Financial and Quantitative Analysis}, 47(2):437--467.

\bibitem[Kot, 2014]{Kot2014First}
Kot, M. (2014).
\newblock {\em A First Course in the Calculus of Variations}, volume~72.
\newblock American Mathematical Society.

\bibitem[Kydland and Prescott, 1982]{Kydland1982Time}
Kydland, F.~E. and Prescott, E.~C. (1982).
\newblock Time to build and aggregate fluctuations.
\newblock {\em Econometrica}, 50:1345--1370.

\bibitem[Ledoit and Wolf, 2003]{Ledoit2003Improved}
Ledoit, O. and Wolf, M. (2003).
\newblock Improved estimation of the covariance matrix of stock returns with an
  application to portfolio selection.
\newblock {\em Journal of Empirical Finance}, 10(5):603--621.

\bibitem[Ledoit and Wolf, 2004a]{Ledoit2004Honey}
Ledoit, O. and Wolf, M. (2004a).
\newblock Honey, i shrunk the sample covariance matrix.
\newblock {\em The Journal of Portfolio Management Summer}, 30(4):110--119.

\bibitem[Ledoit and Wolf, 2004b]{Ledoit2004well}
Ledoit, O. and Wolf, M. (2004b).
\newblock A well-conditioned estimator for large-dimensional covariance
  matrices.
\newblock {\em Journal of Multivariate Analysis}, 88(2):365--411.

\bibitem[Ledoit and Wolf, 2017]{Ledoit2017Nonlinear}
Ledoit, O. and Wolf, M. (2017).
\newblock Nonlinear shrinkage of the covariance matrix for portfolio selection:
  Markowitz meets goldilocks.
\newblock {\em The Review of Financial Studies}, 30(12):4349--4388.

\bibitem[Li and Ng, 2000]{Li2000Optimal}
Li, D. and Ng, W.~L. (2000).
\newblock Optimal dynamic portfolio selection: Multiperiod mean‐variance
  formulation.
\newblock {\em Mathematical Finance}, 10(3):387--406.

\bibitem[Lian and Chen, 2019]{Lian2019Portfolio}
Lian, Y.-M. and Chen, J.-H. (2019).
\newblock Portfolio selection in a multi-asset, incomplete-market economy.
\newblock {\em The Quarterly Review of Economics and Finance}, 71:228--238.

\bibitem[Liberzon, 2012]{Liberzon2012Calculus}
Liberzon, D. (2012).
\newblock {\em Calculus of Variations and Optimal Control Theory}.
\newblock Princeton University Press, Princeton.

\bibitem[Markowitz, 1952]{Markowitz1952Portfolio}
Markowitz, H. (1952).
\newblock Portfolio selection.
\newblock {\em The Journal of Finance}, 7(1):77--91.

\bibitem[Markowitz, 1956]{Markowitz1956Optimization}
Markowitz, H. (1956).
\newblock The optimization of a quadratic function subject to linear
  constraints.
\newblock {\em Naval Research Logistics Quarterly}, 3:111–133.

\bibitem[Merton, 1969]{Merton1969Lifetime}
Merton, R.~C. (1969).
\newblock Lifetime portfolio selection under uncertainty: The continuous-time
  case.
\newblock {\em The Review of Economics and Statistics}, 51(3):247--257.

\bibitem[Merton, 1972]{Merton1972Analytical}
Merton, R.~C. (1972).
\newblock An analytical derivation of the efficient portfolio frontier.
\newblock {\em Journal of Financial and Quantitative Analysis}, page
  1851–1872.

\bibitem[Mnih et~al., 2016]{Mnih2016Asynchronous}
Mnih, V., Badia, A.~P., Mirza, M., Graves, A., Lillicrap, T., Harley, T.,
  Silver, D., and Kavukcuoglu, K. (2016).
\newblock Asynchronous methods for deep reinforcement learning.
\newblock In {\em International Conference on Machine Learning}, pages
  1928--1937, New York.

\bibitem[Nachum et~al., 2017]{Nachum2017Improving}
Nachum, O., Norouzi, M., and Schuurmans, D. (2017).
\newblock Improving policy gradient by exploring under-appreciated rewards.
\newblock In {\em International Conference on Learning Representations}, Palais
  des Congr{\`e}s Neptune, Toulon, Fr.

\bibitem[{\O}ksendal, 2010]{Oksendal2010Stochastic}
{\O}ksendal, B. (2010).
\newblock {\em Stochastic Differential Equations: An Introduction with
  Applications}.
\newblock Universitext. Springer, 6th ed edition.

\bibitem[Schweizer, 2010]{Schweizer2010Mean}
Schweizer, M. (2010).
\newblock Mean-variance hedging.
\newblock {\em Encyclopedia of Quantitative Finance}, pages 1177--1180.

\bibitem[Sharpe and William, 1994]{Sharpe1994The}
Sharpe and William, F. (1994).
\newblock The sharpe ratio.
\newblock {\em Journal of Portfolio Management}, 21(1):49--58.

\bibitem[Shi et~al., 2020]{Shi2020Improving}
Shi, F., Shu, L., Yang, A., and He, F. (2020).
\newblock Improving minimum-variance portfolios by alleviating overdispersion
  of eigenvalues.
\newblock {\em Journal of Financial and Quantitative Analysis},
  55(8):2700--2731.

\bibitem[Sutton and Barto, 2018]{Sutton2018Reinforcement}
Sutton, R.~S. and Barto, A.~G. (2018).
\newblock {\em Reinforcement learning: An introduction}.
\newblock MIT press.

\bibitem[Vigna, 2020]{Vigna2020time}
Vigna, E. (2020).
\newblock On time consistency for mean-variance portfolio selection.
\newblock {\em International Journal of Theoretical and Applied Finance},
  23(06):2050042.

\bibitem[Wang et~al., 2019]{Wang2019Exploration}
Wang, H., Zariphopoulou, T., and Zhou, X. (2019).
\newblock Exploration versus exploitation in reinforcement learning: A
  stochastic control approach.
\newblock {\em SSRN Electronic Journal}, pages 1--33.
\newblock
  doi:\href{https://doi.org/10.2139/ssrn.3316387}{10.2139/ssrn.3316387}.

\bibitem[Wang and Zhou, 2020]{Wang2020Continuous}
Wang, H. and Zhou, X.~Y. (2020).
\newblock Continuous-time mean-variance portfolio selection: A reinforcement
  learning framework.
\newblock {\em Mathematical Finance}, 30(4):1273--1308.
\newblock doi:\href{https://doi.org/10.1111/mafi.12281}{10.1111/mafi.12281}.

\bibitem[Wang and Forsyth, 2010]{Wang2010Numerical}
Wang, J. and Forsyth, P. (2010).
\newblock Numerical solution of the {H}amilton-{J}acobi-{B}ellman formulation
  for continuous time mean variance asset allocation.
\newblock {\em Journal of Economic Dynamics and Control}, 34(2):207--230.

\bibitem[Wang and Xie, 2024]{Wang2024Computing}
Wang, Y. and Xie, H. (2024).
\newblock Computing multi-eigenpairs of high-dimensional eigenvalue problems
  using tensor neural networks.
\newblock {\em Journal of Computational Physics}, 506:112928.

\bibitem[Won et~al., 2013]{Won2013Condition}
Won, J.-H., Lim, J., Kim, S.-J., and Rajaratnam, B. (2013).
\newblock Condition-number-regularized covariance estimation.
\newblock {\em Journal of the Royal Statistical Society Series B: Statistical
  Methodology}, 75(3):427--450.

\bibitem[Zhou and Li, 2000]{Zhou2000Continuous}
Zhou, X.~Y. and Li, D. (2000).
\newblock Continuous-time mean-variance portfolio selection: A stochastic {LQ}
  framework.
\newblock {\em Applied Mathematics and Optimization}, 42(1):19--33.

\bibitem[Zhu et~al., 2021]{Zhu2021Optimal}
Zhu, D.-M., Gu, J.-W., Yu, F.-H., Siu, T.-K., and Ching, W.-K. (2021).
\newblock Optimal pairs trading with dynamic mean-variance objective.
\newblock {\em Mathematical Methods of Operations Research}, 94(1):145--168.

\end{thebibliography}

\newpage
\appendix

\section{The Proof of Theorem \ref{th:K_k}}\label{app:K_k}

When model parameters $\mu$ and $\Sigma$ are time-independent, the average (current) profitability of $n$ risky assets remains constant, i.e., 
	$K(t,T)=A(t)=K$.
Similarly, the average (current) profitability of the $i$-th risky asset is also constant, i.e., 
\begin{align*}
	K^{(i)}(t,T)=A^{(i)}(t)=K^{(i)}, \qquad i=1,\dots,n.
\end{align*}
According to the definition of average profitability of $n$ risky assets in \eqref{equ:K}, we have
\begin{equation*}
	K=(\mu-r)^\top\Sigma^{-1}(\mu-r)
	=(\mu-r)^\top (D L D^\top)^{-1} (\mu-r).
\end{equation*}
On the other hand, we have 
	$\mu^{(i)}-r=\sqrt{K^{(i)}}\sigma^{(i)}$
then the Equation \eqref{equ:K_k} in Theorem \ref{th:K_k} can be obtained.

\section{The Proof of Lemma \ref{le:rl_aux}}\label{app:rl_aux}

We proof Lemma \ref{le:rl_aux} by contradiction.
We assume that $\{P^*(t,\cdot)\}_{0\leqslant t\leqslant T}$, is the optimal exploratory portfolio selection in problem \eqref{model_rl} but not the optimal one in auxiliary problem \eqref{model_rl_aux}.
Then, there exists $\{P(t,\theta)\}_{0\leqslant t\leqslant T}$ such that
\begin{equation}\label{equ:con}
\begin{split}
	{\rm E}\Big(-\gamma (\widetilde{W}_T^*)^2+\tau \widetilde{W}_T^*+\lambda \mathcal{H}(P^*(\cdot,\cdot))\Big)
	<{\rm E}\Big(-\gamma \widetilde{W}_T^2+\tau \widetilde{W}_T+\lambda \mathcal{H}(P(\cdot,\cdot))\Big).
\end{split}
\end{equation}
Next, we will proof that the objective function in \eqref{model_rl_aux} with $\{P(t,\theta)\}_{0\leqslant t\leqslant T}$ will become larger than that with $P^*(t,\theta)_{0\leqslant t\leqslant T}$.

It is observed that
\begin{align*}
	f(x,y)=-\gamma x+\gamma y^2+y, \quad \mbox{ with }\gamma>0,
\end{align*}
is a convex function, i.e., 
\begin{align*}
  f(x,y)\geqslant f(x_0,y_0)+f_x(x_0,y_0)\cdot(x-x_0)+f_y(x_0,y_0)\cdot(y-y_0).
\end{align*}
Setting $x={\rm E}(\widetilde{W}_T^2), y={\rm E}(\widetilde{W}_T), x_0={\rm E}(\widetilde{W}_T^*)^2, y_0={\rm E}(\widetilde{W}_T^*)$, 
we have 
\begin{equation*}
\begin{split}
	&f({\rm E}(\widetilde{W}_T^2),{\rm E}(\widetilde{W}_T))\\
  	\geqslant &f({\rm E}((\widetilde{W}_T^*)^2),{\rm E}(\widetilde{W}_T^*))-\gamma({\rm E}(\widetilde{W}_T^2)-{\rm E}((\widetilde{W}_T^*)^2))+(2\gamma {\rm E}(\widetilde{W}_T^*)+1)({\rm E}(\widetilde{W}_T)-{\rm E}(\widetilde{W}_T^*)).
\end{split}
\end{equation*}
Because of $f({\rm E}(\widetilde{W}_T^2),{\rm E}(\widetilde{W}_T))={\rm E}(\widetilde{W}_T)-\gamma {\rm Var}(\widetilde{W}_T)$,
we have 
\begin{equation*}
\begin{split}
  	&{\rm E}(\widetilde{W}_T)-\gamma {\rm Var}(\widetilde{W}_T) \\
	\geqslant &{\rm E}(\widetilde{W}_T^*)-\gamma {\rm Var}(\widetilde{W}_T^*)-\gamma {\rm E}(\widetilde{W}_T^2)+\gamma {\rm E}((\widetilde{W}_T^*)^2)
	+\tau {\rm E}(\widetilde{W}_T)-\tau {\rm E}(\widetilde{W}_T^*)
\end{split}
\end{equation*}
with $\tau=2\gamma {\rm E}(\widetilde{W}_T^*)+1$.
Thus, 
we can derive that
\begin{align*}
  	&{\rm E}(\widetilde{W}_T)-\gamma {\rm Var}(\widetilde{W}_T)
	+\lambda \mathcal{H}(P(\cdot,\cdot))\\
	\geqslant &{\rm E}(\widetilde{W}_T^*)-\gamma {\rm Var}(\widetilde{W}_T^*)-\gamma {\rm E}(\widetilde{W}_T^2)+\tau {\rm E}(\widetilde{W}_T)
	+\gamma {\rm E}((\widetilde{W}_T^*)^2)-\tau {\rm E}(\widetilde{W}_T^*)
	+\lambda \mathcal{H}(P(\cdot,\cdot))\\
  	> &{\rm E}(\widetilde{W}_T^*)-\gamma {\rm Var}(\widetilde{W}_T^*)
	+\lambda \mathcal{H}(P^*(\cdot,\cdot))
\end{align*}
which is contradictory to our assumptions that $P^*(t,\theta)$ is the optimal exploratory portfolio selection in problem \eqref{model_rl}.

\section{The Proof of Theorem \ref{th:v_theta_n}}\label{app:v_theta_n}

The derivation is divided into two parts:

1.
We first apply the high dimensional Euler-Lagrange theorem \citep{Kot2014First} to HJB equation \eqref{equ:HJB},
and derive the relationship between the optimal exploratory portfolio selection and parameter $\tau$.

According to HJB equation \eqref{equ:HJB}, the optimal exploratory portfolio selection $P^*(t,\theta)$ can be obtained by solving a constrained optimization problem:
\begin{align*}
  &\max_{P(t,\cdot)} \int_{\mathbb{R}^n} \Big(
  -\lambda \ln{P(t,\theta)}
  +\dfrac{\partial V}{\partial w}(t,w)\theta^\top (\mu-r)
  +\dfrac{1}{2}\dfrac{\partial^2 V}{\partial w^2}(t,w)\theta^\top\Sigma\theta\Big)P(t,\theta) d\theta\\
	&s.t. ~\int_{\mathbb{R}^n}P(t,\theta) d\theta=1.
\end{align*}
Thus, $P^*(t,\theta)$ satisfies
\begin{align*}
  -\lambda \ln{P^*(t,\theta)}
  +\dfrac{\partial V}{\partial w}(t,w)\theta^\top (\mu-r)
  +\dfrac{1}{2}\dfrac{\partial^2 V}{\partial w^2}(t,w)\theta^\top\Sigma\theta -k
  -\lambda =0,
\end{align*}
where $k$ is the Lagrange multiplier.
And, it follows that
\begin{align*}
	P^*(t,\theta)
	=\dfrac{{\rm exp}\Big(\frac{1}{\lambda}\Big(\frac{\partial V}{\partial w}(t,w)\theta^\top (\mu-r) +\frac{1}{2}\frac{\partial^2 V}{\partial w^2}(t,w)\theta^\top\Sigma\theta\Big)\Big)}
	{\int_{\mathbb{R}^n}{\rm exp}\Big(\frac{1}{\lambda }\Big(\frac{\partial V}{\partial w}(t,w)\theta^\top (\mu-r) +\frac{1}{2}\frac{\partial^2 V}{\partial w^2}(t,w)\theta^\top\Sigma\theta\Big)\Big) d\theta}.
\end{align*}

We conjecture the value function in the form $V(t,w)=-I(t)w^2+H(t)w+G(t)$.
Then, $P^*(t,\theta)$ is the exploratory portfolio selection with multivariate normal distribution 
\begin{align}\label{equ:multi_normal_distribution}
	\mathcal{N}\Big((\dfrac{H(t)}{2I(t)}-w)\Sigma^{-1}(\mu-r),\dfrac{\lambda}{2I(t)}\Sigma^{-1}\Big).
\end{align}
%
%
By substituting the above value function and exploratory portfolio selection back into the HJB equation \eqref{equ:HJB}, it is obtained that $I(t),H(t),G(t)$ should satisfy the ordinary differential equations
\begin{equation*}
\begin{cases}
	I^{'}(t)=-I(t)A(t)\\
	H^{'}(t)=H(t)A(t)\\
	G^{'}(t)=-\dfrac{H^2(t)}{4I(t)}A(t)-\dfrac{\lambda n}{2}\ln{\dfrac{\pi}{\lambda}}+\dfrac{\lambda n}{2}\ln(I(t))+\dfrac{\lambda}{2}\ln(|\Sigma|)
\end{cases}
\end{equation*}
with boundary conditions $I(T)=\gamma,H(T)=\tau,G(T)=0$.
Then, we calculate that
\begin{equation}\label{equ:I_H_G}
\begin{cases}
	I(t)=\gamma e^{\int_t^T -A(s) ds}\\
	H(t)=\tau e^{\int_t^T -A(s) ds}\\
	G(t)
	=\dfrac{\tau^2}{4\gamma}(1-e^{\int_t^T -A(s) ds})
	+\dfrac{\lambda n}{2}\int_t^T [\ln(\dfrac{\pi\lambda}{\gamma})
	-\dfrac{1}{n}\ln(|\Sigma|)+\int_s^T A(r)dr]ds,
\end{cases}
\end{equation}
and $H(t)$ is related to parameter $\tau$.

2.
Next, we focus on the calculation of $\tau$ with the condition $\tau=1+2\gamma {\rm E}\Big(\widetilde{W}_T^*\Big)$.

Under the exploratory portfolio selection \eqref{equ:multi_normal_distribution}, the wealth process \eqref{equ:wealth_RL} evolves as
\begin{align*}
	d\widetilde{W}_s^*
	=-\dfrac{(H(s)+2I(s)\widetilde{W}_s^*)}{2I(s)}A(s) \cdot ds
	+\sqrt{ -\dfrac{\lambda n}{2I(s)}+\dfrac{(H(s)+2I(s)\widetilde{W}_s^*)^2}{4I^2(s)}A(s)}\cdot d\widetilde{B}_s.
\end{align*}
Taking expectations on both sides of the above equation, we conclude that ${\rm E}(\widetilde{W}_s)$ satisfies the nonhomogeneous linear ordinary differential equation
\begin{align*}
	d{\rm E}(\widetilde{W}_s^*)
	={\rm E}(d\widetilde{W}_s^*)=-\dfrac{(H(s)+2I(s){\rm E}(\widetilde{W}_s^*))}{2I(s)}A(s)\cdot ds
	=\Big(-{\rm E}(\widetilde{W}_s^*)A(s)+\dfrac{\tau}{2\gamma}A(s)\Big)\cdot ds
\end{align*}
with the initial condition ${\rm E}(\widetilde{W}_0^*)=w^o$.
Thus, ${\rm E}(\widetilde{W}_s^*)$ can be expressed as
\begin{equation*}
	{\rm E}(\widetilde{W}_s^*)=e^{\int_0^s -A(t)dt}\Big( \dfrac{\tau}{2\gamma}\int_0^s A(t)e^{\int_0^t A(k)dk }dt+w^o \Big).
\end{equation*}
And, we have
\begin{equation}\label{equ:E_WT}
\begin{split}
	{\rm E}(\widetilde{W}_T^*)
	&=e^{\int_0^T -A(t)dt}\Big( \dfrac{\tau}{2\gamma}\int_0^T A(t)e^{\int_0^t A(k)dk }dt+w^o \Big)\\
	&=\dfrac{\tau}{2\gamma}
	(1-e^{\int_0^T -A(t)dt})
	+w^o e^{\int_0^T -A(t)dt}
\end{split}
\end{equation}

Substituting \eqref{equ:E_WT} back into the condition $\tau=1+2\gamma {\rm E}\Big(\widetilde{W}_T^*\Big)$, $\tau$ can be calculated as
\begin{align*}
  	\tau=e^{K(0,T)\cdot T}+2\gamma w^o
\end{align*}
with $K(0,T)$ defined in \eqref{equ:K}.
Thus, according to the above expression of $\tau$, the optimal exploratory portfolio selection \eqref{equ:multi_normal_distribution} is Gaussian with
\begin{align*}
	\mathcal{N}\Big(-(w-\dfrac{e^{K(0,T)\cdot T}+2\gamma w^o}{2\gamma})\Sigma^{-1}(\mu-r),\dfrac{\lambda}{2}\dfrac{e^{K(t,T)\cdot (T-t)}}{\gamma}\Sigma^{-1}\Big),
\end{align*}
and the corresponding value function can be expressed as
\begin{align*}
	V^*(t,w)
	&=-\gamma e^{-K(t,T)\cdot (T-t)}(w-\dfrac{e^{K(0,T)\cdot T}+2\gamma w^o}{2\gamma})^2
	+\gamma(\dfrac{e^{K(0,T)\cdot T}+2\gamma w^o}{2\gamma})^2\\
	&+\dfrac{\lambda n}{2}\int_t^T [\ln(\dfrac{\pi\lambda}{\gamma})
	-\dfrac{1}{n}\ln(|\Sigma|)+K(s,T)\cdot(T-s)] ds.
\end{align*}

\section{The Proof of Lemma \ref{le:policy_evaluation}}\label{app:policy_evaluation}

Consider the exploratory portfolio selection with probability density function
\begin{equation*}
	P(t,\cdot)=\mathcal{N}\Big((a_0-w){\bm a_1},e^{a_2}{\bm A_3}\Big), 
	\qquad\forall t\in[0,T].
\end{equation*}

(i)
The exploratory wealth process \eqref{equ:wealth_RL} becomes
\begin{equation*}
	d\widetilde{W}_t 
	= (a_0-\widetilde{W}_t){\bm a_1}^\top (\mu-r) \cdot dt 
	+\sqrt{(a_0-w)^2{\bm a_1}^\top\Sigma{\bm a_1}+{\rm tr}(e^{a_2}{\bm A_3}\Sigma)} \cdot d\widetilde{B}_t
\end{equation*}
with initial wealth $w^o$.
Taking expectations on both sides of the above equation, we conclude that ${\rm E}(\widetilde{W}_t)$ satisfies the nonhomogeneous linear ordinary differential equation
with the initial condition ${\rm E}(\widetilde{W}_0)=w^o$.
Thus, the expectation of terminal wealth is calculated as
\begin{equation*}
\begin{split}
	&{\rm E}(\widetilde{W}_T)
	=e^{\int_0^T -{\bm a_1}^\top (\mu-r)ds}
	\Big(\int_0^T a_0{\bm a_1}^\top(\mu-r)e^{\int_0^s {\bm a_1}^\top (\mu-r)dk }ds+w^o \Big).
\end{split}
\end{equation*}

(ii)
According to the Feynman-Kac formulate \citep{Oksendal2010Stochastic}, the value function $V^P(t,w)$ satisfies
\begin{equation}\label{equ:F_K}
\begin{split}
	&\dfrac{\partial V^{P}}{\partial t}(t,w)
	+\lambda h(P(t,\cdot))\\
	&+\dfrac{\partial V^{P}}{\partial w}(t,w)\int_{\mathbb{R}^n}\theta^\top (\mu-r) P(t,\theta) d\theta
	+\dfrac{1}{2}\dfrac{\partial^2 V^{P}}{\partial w^2}(t,w)\int_{\mathbb{R}^n}\theta^\top\Sigma\theta P(t,\theta) d\theta
	=0.
\end{split}
\end{equation}
When $P(t,\cdot)=\mathcal{N}\Big((a_0-w){\bm a_1},e^{a_2}{\bm A_3}\Big)$,
the PDE \eqref{equ:F_K} becomes
\begin{equation}\label{equ:FK}
\begin{split}
	&\dfrac{\partial V^{P}}{\partial t}(t,w)
	+\dfrac{\lambda n}{2}\ln{(2\pi e)}+\dfrac{\lambda}{2}\ln{|e^{a_2}{\bm A_3}|}\\
	&+\dfrac{\partial V^{P}}{\partial w}(t,w) (a_0-w){\bm a_1}^{\top}(\mu-r)
	+\dfrac{1}{2}\dfrac{\partial^2 V^{P}}{\partial w^2}(t,w)((a_0-w)^2{\bm a_1}^{\top}\Sigma{\bm a_1}+{\rm tr}(e^{a_2}{\bm A_3}\Sigma))
	=0.
\end{split}
\end{equation}
We conjecture the value function in the form $V^P(t,w)=-I^P(t)w^2+H^P(t)w+G^P(t)$.
The coefficients of the quadratic, primary and constant terms of $w$ in equation \eqref{equ:FK} are all zero,
i.e.,
\begin{equation*}
\begin{cases}
	I^{'}(t)-2I(t){\bm a_1}^{\top}(\mu-r)+I(t){\bm a_1}^{\top}\Sigma{\bm a_1}=0\\
	H^{'}(t)-2I(t)a_0{\bm a_1}^{\top}(\mu-r)+2I(t)a_0{\bm a_1}^{\top}\Sigma{\bm a_1}-H(t){\bm a_1}^{\top}(\mu-r)=0\\
	G^{'}(t)+\dfrac{\lambda n}{2}\ln{(2\pi e)}+\dfrac{\lambda}{2}\ln{|e^{a_2}{\bm A_3}|}-I(t)a_0^2{\bm a_1}^{\top}\Sigma{\bm a_1}-I(t){\rm tr}(e^{a_2}{\bm A_3}\Sigma)+H(t)a_0{\bm a_1}^{\top}(\mu-r)=0
\end{cases}
\end{equation*}
with the terminal condition 
$I^P(T)=\gamma, H^P(T)=\tau^P, G^P(T)=0$.
Solving the above PDEs, we have
\begin{equation*}
\begin{cases}
	I(t)=\gamma e^{\int_{t}^T-\left(2{\bm a_1}^{\top}(\mu - r)-{\bm a_1}^{\top}\Sigma{\bm a_1}\right)ds}\\
	H(t)=e^{\int_{t}^T-{\bm a_1}^{\top}(\mu - r)ds}\left[\tau^P-2\gamma\int_{t}^T a_0\left({\bm a_1}^{\top}\Sigma{\bm a_1}-{\bm a_1}^{\top}(\mu - r)\right)e^{\int_{s}^T-\left({\bm a_1}^{\top}(\mu - r)-{\bm a_1}^{\top}\Sigma{\bm a_1}\right)du}ds\right]
\\
	G^P(t)
	=\int_t^T
	\Big[H^P(s)a_0{\bm a_1}^\top(\mu-r)-I^P(s)a_0^2{\bm a_1}^\top\Sigma{\bm a_1}\\
	\hspace{+2cm}
	+\dfrac{\lambda n}{2}\ln{(2\pi e)}+\dfrac{\lambda n}{2}a_2+\dfrac{\lambda}{2}\ln{|{\bm A_3}|}-I^P(s) e^{a_2}{\rm tr}(\Sigma {\bm A_3})
	\Big] ds
\end{cases}
\end{equation*}

\section{The Proof of Lemma \ref{le:policy_improvement}}\label{app:policy_improvement}

For any arbitrarily given exploratory portfolio selection $P(t,\cdot)$ and another exploratory portfolio selection $\widetilde{P}(t,\cdot)$, we first calculate the difference between $V^{\widetilde{P}}(t,W_{t})$ and $V^{P}(t,W_{t})$.
Define $\{\widetilde{W}_s\}_{t<s<T}$ to be the exploratory wealth process \eqref{equ:wealth_RL} generated with the portfolio selection $\{\widetilde{P}(s,\cdot)\}_{t<s<T}$.
Under the assumption in Lemma \ref{le:policy_improvement} that $V^P(T,w)=V^{\widetilde{P}}(T,w)$, we have
\begin{align*}
	&V^{\widetilde{P}}(t,W_{t})-V^{P}(t,W_{t})
	={\rm E}_t\Big(V^{\widetilde{P}}(T,\widetilde{W}_{T})\Big)
	-V^{P}(t,W_{t})\\
	=&{\rm E}_t\Big(V^{P}(T,\widetilde{W}_{T})\Big)
	-V^{P}(t,W_{t})
	={\rm E}_t\Big(V^{P}(T,\widetilde{W}_{T})
	-V^{P}(t,W_{t})\Big)\\
	=&
	\int_t^T
	[\dfrac{\partial V^{P}}{\partial s}(s,w)
	+\dfrac{\partial V^{P}}{\partial w}(s,w)\int_{\mathbb{R}^n}\theta^\top (\mu-r) \widetilde{P}(s,\theta) d\theta\\
	&\quad+\dfrac{1}{2}\dfrac{\partial^2 V^{P}}{\partial w^2}(s,w)\int_{\mathbb{R}^n}\theta^\top\Sigma\theta \widetilde{P}(s,\theta) d\theta]ds
	+\lambda \int_t^T h(\widetilde{P}(s,\cdot)) ds.
\end{align*}
When $\widetilde{P}(t,\theta)$ is the extremum of the optimization problem
\begin{equation}\label{equ:wide_P}
\begin{split}
  &\max_{P(t,\cdot)} \int_{\mathbb{R}^n} \Big(
  -\lambda \ln{P(t,\theta)}
  +\dfrac{\partial V^{P}}{\partial w}(t,w)\theta^\top (\mu-r)
  +\dfrac{1}{2}\dfrac{\partial^2 V^{P}}{\partial w^2}(t,w)\theta^\top\Sigma\theta\Big)P(t,\theta) d\theta,\\
	&s.t. ~\int_{\mathbb{R}^n}P(t,\theta) d\theta=1,
\end{split}
\end{equation}
$\widetilde{P}(t,\theta)$ is given by the Gaussian distribution in \eqref{equ:tilde_P}.
Then, $V^{\widetilde{P}}(t,W_{t})-V^{P}(t,W_{t})$ becomes
\begin{align*}
	V^{\widetilde{P}}(t,W_{t})-V^{P}(t,W_{t})
	=& \int_t^T
	\Big[\dfrac{\partial V^{P}}{\partial s}(s,w)
	+\dfrac{\partial V^{P}}{\partial w}(s,w)\int_{\mathbb{R}^n}\theta^\top (\mu-r) \widetilde{P}(s,\theta) d\theta\\
	&+\dfrac{1}{2}\dfrac{\partial^2 V^{P}}{\partial w^2}(s,w)\int_{\mathbb{R}^n}\theta^\top\Sigma\theta \widetilde{P}(s,\theta) d\theta
	+\lambda h(\widetilde{P}(s,\cdot)\Big] ds\\
	=& \int_t^T
	\Big[\dfrac{\partial V^{P}}{\partial s}(s,w)
	+\max_{P(s,\cdot)}\Big\{\dfrac{\partial V^{P}}{\partial w}(s,w)\int_{\mathbb{R}^n}\theta^\top (\mu-r) P(s,\theta) d\theta\\
	&+\dfrac{1}{2}\dfrac{\partial^2 V^{P}}{\partial w^2}(s,w)\int_{\mathbb{R}^n}\theta^\top\Sigma\theta P(s,\theta) d\theta
	+\lambda h(P(s,\cdot)\Big\}\Big] ds.
\end{align*}
On the other hand, according to Feynman-Kac formulate, the value function $V^P(s,w)$ satisfies
\begin{align*}
	&\dfrac{\partial V^{P}}{\partial s}(s,w)
	+\dfrac{\partial V^{P}}{\partial w}(s,w)\int_{\mathbb{R}^n}\theta^\top (\mu-r) P(s,\theta) d\theta\\
	&+\dfrac{1}{2}\dfrac{\partial^2 V^{P}}{\partial w^2}(s,w)\int_{\mathbb{R}^n}\theta^\top\Sigma\theta P(s,\theta) d\theta+\lambda  h(P(s,\cdot))=0,
\end{align*}
for $s\in[t,T]$.
Thus, we have
\begin{align*}
	V^{\widetilde{P}}(t,W_{t})-V^{P}(t,W_{t})
	\geq0.
\end{align*}

\section{The Proof of Theorem \ref{th:sac}}\label{app:sac}

For any arbitrarily given initial exploratory portfolio selection $P_0(t,\theta)$ with 
\begin{equation*}
	\mathcal{N}\Big((a_0-w){\bm a_1},e^{a_2(T-t)}{\bm A_3}\Big),
\end{equation*}
according to Lemma \ref{le:policy_evaluation}, $\tau^{P_0}$ can be given as
\begin{equation*}
	\tau^{P_0}
	=1+2\gamma e^{\int_0^T -{\bm a_1}^\top (\mu-r)ds}
	\Big(\int_0^T a_0{\bm a_1}^\top(\mu-r)e^{\int_0^s {\bm a_1}^\top (\mu-r)dk }ds+w^o \Big).
\end{equation*}
The initial value function $V^{P_0}(t,w)$ with boundary condition $V^{P_0}(T,w)=-\gamma w^2+\tau_0 w$ can be shown as
\begin{align*}
	V^{P_0}(t,w)=I^{P_0}(t)w^2+H^{P_0}(t)w+G^{P_0}(t)
\end{align*}
in which
\begin{align*}
	&I^{P_0}(t)=-\gamma e^{\int_t^T -b_1(s)-b_2(s) ds},\\
	&H^{P_0}(t)
	=\tau_0 e^{\int_t^T -b_2(s) ds}
	-2\gamma e^{\int_t^T -b_2(s) ds}
	\int_t^T a_0e^{\int_s^T -b_1(r) dr}b_1(s)ds
\end{align*}
with
$b_1(s)={\bm a_1}^\top(\mu-r)-{\bm a_1}^\top\Sigma{\bm a_1}$ and 
$b_2(s)={\bm a_1}^\top(\mu-r)$.

Then, according to Lemma \ref{le:policy_improvement}, the exploratory portfolio selection is updated into $P_1(t,\theta)$ which is the probability density function of multivariate normal distribution
\begin{align*}
	P_1(t,\theta)
	=&\mathcal{N}\Big(\dfrac{\frac{\partial V^{P_0}}{\partial w}(t,w)}{-\frac{\partial^2 V^{P_0}}{\partial w^2}(t,w)}\Sigma^{-1}(\mu-r),\dfrac{\lambda}{-\frac{\partial^2 V^{P_0}}{\partial w^2}(t,w)}\Sigma^{-1}\Big)\\
	=&\mathcal{N}\Big((\dfrac{\tau_0-2\gamma \int_t^T a_0e^{\int_s^T -b_1(r) dr}b_1(s)ds}{2\gamma e^{\int_t^T -b_1(s) ds}}-w)\Sigma^{-1}(\mu-r),
	\dfrac{\lambda}{2}\dfrac{e^{\int_t^T b_1(s)+b_2(s) ds}}{\gamma}\Sigma^{-1}\Big).
\end{align*}
Again, according to Lemma \ref{le:policy_evaluation}, $\tau^{P_1}$ can be given as
\begin{equation*}
	\tau^{P_1}
	=1+2\gamma e^{\int_0^T -A(s)ds}
	\Big(\int_0^T \dfrac{\tau^{P_0}-2\gamma \int_s^T a_0e^{\int_k^T -b_1(r) dr}b_1(k)dk}{2\gamma e^{\int_s^T -b_1(k) dk}}A(s)e^{\int_0^s A(k)dk }ds+w^o \Big).
\end{equation*}
The value function $V^{P_1}(t,w)$ with boundary condition $V^{P_1}(T,w)=-\gamma w^2+\tau^{P_1} w$ becomes
\begin{align*}
	V^{P_1}(t,w)=I^{P_1}(t)w^2+H^{P_1}(t)w+G^{P_1}(t)
\end{align*}
in which
	$I^{P_1}(t)
	=-\gamma e^{\int_t^T -A(s) ds}$ and 
	$H^{P_1}(t)
	=\tau^{P_1} e^{\int_t^T -A(s) ds}$.

Again, according to Lemma \ref{le:policy_improvement}, the exploratory portfolio selection is updated to $P_2(t,\theta)$ which is the probability density function of multivariate normal distribution
\begin{align*}
	P_2(t,\theta)
	=&\mathcal{N}\Big(\dfrac{\frac{\partial V^{P_1}}{\partial w}(t,w)}{-\frac{\partial^2 V^{P_1}}{\partial w^2}(t,w)}\Sigma^{-1}(\mu-r),\dfrac{\lambda}{-\frac{\partial^2 V^{P_1}}{\partial w^2}(t,w)}\Sigma^{-1}\Big)\\
	=&\mathcal{N}\Big((\dfrac{\tau^{P_1}}{2\gamma}-w)\Sigma^{-1}(\mu-r),\dfrac{\lambda}{2}\dfrac{e^{K(t,T)\cdot(T-t)}}{\gamma}\Sigma^{-1}\Big).
\end{align*}

Then, we will prove that, for $n\geqslant2$,
\begin{align}\label{equ:tau_iter}
	\tau^{P_n}
	=1+2\gamma e^{\int_0^T -A(s)ds}
	\Big(\int_0^T \dfrac{\tau^{P_{n-1}}}{2\gamma}A(s)e^{\int_0^s A(k)dk }ds+w^o \Big)
\end{align}
and
\begin{align}\label{equ:theta_iter}
	P_{n+1}(t,\theta)
	=\mathcal{N}\Big((\dfrac{\tau^{P_{n}}}{2\gamma}-w)\Sigma^{-1}(\mu-r),\dfrac{\lambda}{2}\dfrac{e^{K(t,T)\cdot(T-t)}}{\gamma}\Sigma^{-1}\Big).
\end{align}
In fact, for $\forall k$, if
\begin{equation*}
	P_k(t,\theta)
	=\mathcal{N}\Big((\dfrac{\tau^{P_{k-1}}}{2\gamma}-w)\Sigma^{-1}(\mu-r),\dfrac{\lambda}{2}\dfrac{e^{K(t,T)\cdot(T-t)}}{\gamma}\Sigma^{-1}\Big),
\end{equation*}
$\tau^{P_k}$ can be given as
\begin{equation*}
	\tau^{P_k}
	=1+2\gamma e^{\int_0^T -A(s)ds}
	\Big(\int_0^T \dfrac{\tau^{P_{k-1}}}{2\gamma}A(s)e^{\int_0^s A(k)dk }ds+w^o \Big).
\end{equation*}
According to Lemma \ref{le:policy_evaluation}, the value function $V^{P_k}(t,w)$ with boundary condition $V^{P_k}(T,w)=-\gamma w^2+\tau^{P_k} w$ can be shown as
\begin{align*}
	V^{P_k}(t,w)=I^{P_k}(t)w^2+H^{P_k}(t)w+G^{P_k}(t),
\end{align*}
in which
	$I^{P_k}(t)
	=-\gamma e^{\int_t^T -A(s) ds}$ and
	$H^{P_k}(t)
	=\tau^{P_k} e^{\int_t^T -A(s) ds}$.
Applying Lemma \ref{le:policy_improvement}, the $k+1$-th iteration of the exploratory portfolio selection is updated to
\begin{align*}
	P_{k+1}(t,\theta)
	=&\mathcal{N}\Big(\dfrac{\frac{\partial V^{P_k}}{\partial w}(t,w)}{-\frac{\partial^2 V^{P_k}}{\partial w^2}(t,w)}\Sigma^{-1}(\mu-r),\dfrac{\lambda}{-\frac{\partial^2 V^{P_k}}{\partial w^2}(t,w)}\Sigma^{-1}\Big)\\
	=&\mathcal{N}\Big((\dfrac{\tau^{P_k}}{2\gamma}-w)\Sigma^{-1}(\mu-r),\dfrac{\lambda}{2}\dfrac{e^{K(t,T)\cdot(T-t)}}{\gamma}\Sigma^{-1}\Big).
\end{align*}
By mathematical induction, \eqref{equ:tau_iter} and \eqref{equ:theta_iter} is obtained.

Equation \eqref{equ:tau_iter} gives the recursion sequence of $\{\tau^{P_n}\}$.
Thus,
\begin{equation*}
	\lim_{n\to+\infty}\tau^{P_n}
	=\dfrac{1+2\gamma w^o e^{\int_0^T -A(s)ds}}{e^{\int_0^T -A(s)ds}}
	=e^{\int_0^T A(s)ds}+2\gamma w^o.
\end{equation*}
In this way, we draw the conclusion that
\begin{equation*}
\begin{split}
	\lim_{n\to+\infty}P_{n}(t,\theta)
	&=\lim_{n\to+\infty}
	\mathcal{N}\Big((\dfrac{\tau^{P_n}}{2\gamma}-w)\Sigma^{-1}(\mu-r),\dfrac{\lambda}{2}\dfrac{e^{K(t,T)\cdot(T-t)}}{\gamma}\Sigma^{-1}\Big)\\
	&=\mathcal{N}\Big((\dfrac{e^{\int_0^T A(s)ds}+2\gamma w^o}{2\gamma}-w)\Sigma^{-1}(\mu-r),\dfrac{\lambda}{2}\dfrac{e^{K(t,T)\cdot(T-t)}}{\gamma}\Sigma^{-1}\Big),
\end{split}
\end{equation*}
and $\lim\limits_{n\to+\infty}V^{P_n}(t,w)$ becomes the optimal value function in \eqref{equ:V_n}.

\end{document}